\newtheorem{theorem}{Theorem}
\newtheorem{prop}[theorem]{Proposition}
\newtheorem{thm}[theorem]{Theorem}
\newtheorem{corollary}[theorem]{Corollary}
\newtheorem{lemma}[theorem]{Lemma}
\newtheorem{example}[theorem]{Example}
\newcommand{\myitem}[1]{%
	\item[(#1)]\protected@edef\@currentlabel{#1}%
}
\def\eatspace#1{#1}
\def\step#1#2{\par\kern1pt\hangindent#2em\hangafter=1\noindent\rlap{\small#1}\kern#2em\relax\eatspace}
\let\set\mathbb
\def\<#1>{\langle#1\rangle}
\def\OO{\mathcal{O}}
\def\val{\operatorname{val}}
\def\sol{\operatorname{Sol}}
\def\diag{\operatorname{diag}}
\def\lc{\operatorname{lc}}
\def\lt{\operatorname{lt}}
\def\a{\alpha}
\begin{document}
\title{Hermite Reduction for D-finite~Functions via Integral Bases
\thanks{S.\ Chen was partially supported by the NSFC
	grants 11871067, 11688101, the Fund of the Youth Innovation Promotion Association, CAS,
	and the National Key Research and Development Project 2020YFA0712300.  L.\ Du
	was supported by the Austrian FWF grant P31571-N32. M. \ Kauers was supported by the Austrian FWF grants P31571-N32 and I6130-N.}}

\author{\bigskip Shaoshi Chen$^{a, b}$, Lixin Du$^{c}$, Manuel Kauers$^c$\\
$^a$KLMM, Academy of Mathematics and Systems Science,\\ Chinese Academy of Sciences,\\ Beijing 100190, China\\
$^b$School of Mathematical Sciences,\\ University of Chinese Academy of Sciences,\\ Beijing 100049, China\\
$^c$Institute for Algebra, Johannes Kepler University,\\ Linz, A4040,  Austria\\
{\sf schen@amss.ac.cn,  lixin.du@jku.at}\\
{\sf manuel.kauers@jku.at}
}

\maketitle

\begin{abstract}
 Trager's Hermite reduction solves the integration problem for algebraic functions via integral bases.
A generalization of this algorithm to D-finite functions has so far been limited to the Fuchsian case.
In the present paper, we remove this restriction and propose a reduction algorithm based on integral
bases that is applicable to arbitrary D-finite functions.
\end{abstract}

\maketitle

\section{Introduction}

Let $R$ be a certain class of functions in one variable $x$ with the derivation $D_x$.
For example, $R$ can be the field of rational functions or algebraic funtions.
In the context of symbolic integration, the \emph{integrability problem} consists in
deciding whether a given element $f\in R$ is of the form $f=D_x(g)$ for some $g\in R$.
If such a $g$ exists, we say that $f$ is \emph{integrable} in~$R$. A relaxed form of
the integrability problem is the \emph{decomposition problem,} which consists in
constructing for a given $f\in R$ elements $g,r\in R$ such that $f=D_x(g)+r$ and $r$
is minimal in a certain sense. Ideally the ``certain sense'' should be such that $r=0$
whenever $f$ is integrable. If $f\in R$ depends on a second variable~$t$, one can also
consider the \emph{creative telescoping} problem: given an element $f\in R$, the task
is to construct $c_0, \ldots, c_r \in R$, not all zero, such that $c_i$ is free of $x$
for all $i \in \{0, \ldots, r\}$ and
\[
c_r D_t^r(f) + \cdots + c_0 f = D_x(g)\quad \text{for some $g\in R$}.
\]
The operator $L = c_r D_t^r + \cdots + c_0$, if it exists, is called a \emph{telescoper}
for~$f$, and $g$ is called a \emph{certificate} for~$L$.

Zeilberger first showed the existence of telescopers for D-finite
functions~\cite{Zeilberger1990}. Almkvist and Zeilberger~\cite{Almkvist1990}
solved the integrability problem and the creative telscoping problem for
hyperexponential functions. Using the adjoint Ore algebra, Abramov and van
Hoeij~\cite{AbramovVanHoeij1999} solved the integrability problem for D-finite
functions. Chyzak~\cite{chyzak00} extended the method of creative telescoping
from hyperexponential functions to general D-finite functions. During the past
ten years, a reduction-based telescoping approach has become popular, which can
find a telescoper without computing the corresponding certificate. This improves
the efficiency of telescoping algorithms because certificates tend to have much
larger size than telescopers and sometimes are not needed. This approach
was first formulated for rational funtions~\cite{bostan10b} and then extended to
hyperexponential functions~\cite{bostan13a}, algebraic functions~\cite{chen16a},
Fuchsian D-finite functions~\cite{chen17a}, and D-finite
functions~\cite{vanderHoeven21,bostan18a}. The reduction-based telescoping
algorithms for algebraic functions and for Fuchsian D-finite functions employ
the notion of integral bases, while the known reduction-based telescoping
algorithms applicable to arbitrary D-finite functions work differently.

The notion of integrality proposed by Kauers and Koutschan~\cite{kauers15b} for Fuchsian
D-finite functions has recently been generalized by Aldossari~\cite{Aldossari20} to
arbitrary D-finite functions, so that the question arises whether there is also
a reduction-based telescoping algorithm for arbitrary D-finite functions based
on integral bases. The purpose of the present paper is to answer this question
affirmatively. This paper is based on the results of Chapter~6 of the second author's Ph.D. thesis~\cite{Du2022}.

First we recall integral bases for D-finite
functions~\cite{kauers15b,Aldossari20} in Section~\ref{sec:integral D-finite}. Then
we extend Hermite reduction for the Fuchsian case to the non-Fuchsian case in
Section~\ref{sec:HR}, which reduces the pole orders of D-finite functions at
finite places. Instead of using polynomial
reduction~\cite{bostan13a,chen16a,chen17a,chen21b} to reduce the pole order at
infinity, we present a Hermite reduction at infinity. Combining Hermite reduction at finite places and at infinity, we are able to determine the integrability of D-finite functions in Section~\ref{sec:add} and present a reduction-based telescoping algorithm for D-finite functions in Section~\ref{sec:telescoper}.

%
\section{Integral Bases}\label{sec:integral D-finite}
Below we recall the value functions and integral bases for arbitrary linear differential operators~\cite{kauers15b, imamoglu17a,Imamoglu-vanHoeij17,Aldossari20}. Let $C$ be a field of characteristic zero and $\bar C$ be the algebraic closure of $C$. Let $C(x)[D]$ be an Ore algebra, where $D$ is the differentiation with respect to $x$ and satisfies the commutation rule $Dx=xD+1$. For an operator $L=\ell_0 + \ell_1 D +\cdots + \ell_n D^n \in C(x)[D]$ with $\ell_n \neq 0$, we consider the left $C(x)[D]$-module $A= C(x)[D]/\<L>$, where $\<L>=C(x)[D]L$. The elements of $A$ are called {\em D-finite} functions.  When there is no ambiguity, an equivalence class $P+\<L>$ in $A$ is also denoted by $P$. Every element of $A$ can be uniquely represented by $P = b_0 + b_1 D +\cdots + b_{n-1} D^{n-1}$ with $b_i\in C(x)$.

For each $\alpha\in  \bar C$, an operator $L$ of order $n$ admits $n$ linearly independent solutions of the form
\begin{equation}\label{EQ:generalized_sol_a}
	(x-\alpha)^\mu \exp(p((x-\alpha)^{-1/s})))b((x-\alpha)^{1/s}, \log(x-\alpha))
\end{equation}
for some $s\in\set N$, $\mu\in \bar C$, $p\in \bar C[x]$ and $b\in\bar  C[[x]][y]$. Such objects are called generalized series solutions at $\alpha$, see~\cite{ince26,KauersPaule2011Book}. For $\alpha=\infty$, the operator $L$ admits $n$ linearly independent solutions of the form
\begin{equation}\label{EQ:generalized_sol_infty}
	x^{-\mu}\exp(p(x^{1/s}))b(x^{-1/s}, \log(x^{-1}))
\end{equation}
for some $s\in\set N$, $\mu\in \bar C$, $p\in \bar C[x]$ and $b\in  \bar C[[x]][y]$. For each $\alpha\in  \bar C\cup \{\infty\}$, let $\sol_\alpha(L)$ be the set of all finite $\bar C$-linear combination of generalized series solutions of $L$ at $\alpha$. Then $\sol_\alpha(L)$ is a $\bar C$-vector space of dimension $n$. Throughout the paper, we assume that for each $\a\in \bar C\cup\{\infty\}$, all series of $\sol_\a(L)$ have $p\in C[x]$, $\mu\in C$ and $b\in C[[x]][y]$ (this can always be achieved by a suitable choice of $C$). If all series of $\sol_\a(L)$ have $p=0$ and $s=1$, then $L$ is called {\em Fuchsian} at $\alpha$. The operator $L$ is simply called {\em Fuchsian} if it is Fuchsian at all $\alpha\in  \bar C\cup\{\infty\}$. In this case, the elements of $A$ are called {\em Fuchsian D-finite} functions.


For simplicity, we assume throughout that $C$ is a subfield of $\set C$. Given two complex numbers $a,b\in \set C$, we say $a\geq b$ if and only if $\text{Re}(a)\geq \text{Re}(b)$. For each $\alpha\in \bar C\cup \{\infty\}$, let $z=x-\alpha$ (or $z=\tfrac{1}{x}$ if $\alpha=\infty$). The {\em valuation} $\nu_\alpha(t)$ of a term $t:=z^r\exp(p(z^{-1/s}))\log(z)^\ell$ is the real part of the local exponent $r$. The {\em valuation} $\nu_\alpha(f)$ of a nonzero generalized series $f$ at $\alpha$ is the minimum of the valuations of all the terms appearing in $f$ (with nonzero coefficients). The {\em valuation} of $0$ is defined as $\infty$. A generalized series $f$ at $\alpha$ is called {\em integral} if $\nu_\alpha(f)\geq0$. A non-integral series $f$ is said to have a {\em pole} at the reference point and its {\em pole order} at $\a$ is defined as $-\nu_\a(f)$. Note that in this terminology, it may happen that $\nu_\a(f') < \nu_\a(f) -1$. For example, $f=\exp(x^{-2})$ is integral at $0$, while the valuation of $f'=-2x^{-3}\exp(x^{-2})$ at $0$ is $-3$, not $-1$. The valuation of a series only depends on its local exponent and not on its exponential part. This valuation is the same as in~\cite[Definition 5.4]{Aldossari20}. 

For each $\alpha\in \bar C\cup\{\infty\}$, an operator $P = b_0 + b_1 D + \cdots +b_{n-1} D^{n-1}$ in $A=C(x)[D]/\<L>$ acts on a generalized series $f\in \sol_\alpha(L)$ via
\[P\cdot f =b_0 f + b_1 f^\prime + \cdots + b_{n-1} f^{(n-1)},\]
where $^\prime$ is the derivation with respect to $x$.
Let $f_1, \ldots, f_n$ be a basis of $\sol_\alpha(L)$ as in the form of~\eqref{EQ:generalized_sol_a} (or~\eqref{EQ:generalized_sol_infty} if $
\a=\infty$). The {\em value function} $\val_\alpha\colon A\to \set R \cup\{\infty\}$ is defined as
\begin{equation}
	\val_\alpha(P) :=\min_{i=1}^n\nu_\a(P\cdot f_i).
\end{equation}
Then $\val_\alpha(P)$ is the minimum valuation of all series $P\cdot f$ at $\alpha$, where $f$ runs through all series solutions in $\sol_\alpha(L)$. So this definition of value functions is independent of the choice of the basis of $\sol_\alpha(L)$. 
An element $P\in A$ is called {\em (locally) integral} at $\alpha\in  \bar C\cup \{\infty\}$ if $\val_\alpha(P)\geq 0$. If $P$ is not locally integral at $\a$, then $P$ is said to have a pole at $\a$ and its {\em pole order} at $\a$ is defined as $-\val_\a(P)$. An element $P\in A$ is called {\em (globally) integral} if $\val_\alpha(P)\geq 0$ for all $\alpha\in \bar C$, i.e., $P$ is locally integral at all finite places. When $L$ is Fuchsian, this notion of integrality falls back to the Fuchsian case discussed in~\cite{chen17a,kauers15b}.

The set of all globally integral elements $f\in A= C(x)[D]/ \<L>$ forms a $C[x]$-module. A basis of this module is called a {\em (global) integral basis} for $A$. Such bases exist and the algorithm for computing integral bases in the Fuchsian case~\cite{kauers15b} applies to the setting of the non-Fuchsian case literally. More properties can be found in~\cite{Aldossari20}. 

For a fixed $\alpha\in \bar C$, the {\em valuation} $\nu_\alpha$ of a nonzero rational function $f\in C(x)$ is an integer $m\in \set Z$ such that $f = (x-\alpha)^m p/q$ with $p,q\in C[x]$, $\gcd(p,q)=1$ and $(x-\a) \nmid pq$. By convention, set $\nu_\alpha(0)=\infty$. The {\em valuation} $\nu_\infty$ of a rational function $f=p/q\in C(x)$ is $\deg_x(q)-\deg_x(p)$. For each $\alpha\in\bar C\cup \{\infty\}$, the valuation $\nu_\alpha$ of a rational function is the same as the valuation of its Laurent series expansion at $\alpha$. The set $ C(x)_\alpha=\{f\in C(x) \mid \nu_\alpha(f)\geq 0\}$ forms a subring of $C(x)$. The set of all elements $f\in A$ that are locally integral at some fixed $\alpha\in \bar C\cup \{\infty\}$ forms a $ C(x)_\alpha$-module. A basis of this module is called a {\em local integral basis} at $\alpha$ of $A$. Such a basis can also be computed~\cite{kauers15b,Aldossari20}.


%




An integral basis $\{\omega_1,\ldots,\omega_n\}$ is always a $C(x)$-vector space basis of $A$. A key feature of integral bases is that they make poles explicit. Writing an element $f\in A$ as a  combination  $f= \sum_{i=1}^nf_i\omega_i$ for some $f_i\in C(x)$, we have that $f$ has a pole at $\alpha\in  \bar C$ if and only if at least one of the $f_i$ has a pole there. Furthermore, $\lfloor\val_\alpha(f)\rfloor$ is a lower bound for the valuations of all the $f_i$'s at $\alpha$.
\begin{lemma}\label{Lem:pole}
	Let $\{\omega_1, \ldots, \omega_n\}$ be a local integral basis of $A$ at some fixed $\alpha\in  \bar C \cup\{\infty\}$.  Let $f\in A$
	and $f_1,\ldots, f_n\in C(x)$ be such that $f=\sum_{i=1}^nf_i\omega_i$. Then
	\begin{enumerate}
		\item $f$ is integral at $\alpha$ if and only if for each $i\in\{1,\ldots, n\}$, $f_i\omega_i$ is integral at $\a$.
		
		\item $\lfloor\val_\alpha(f)\rfloor = \min_{i=1}^n \nu_\alpha(f_i)$.
	\end{enumerate}
\end{lemma}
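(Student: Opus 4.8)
The plan rests on two structural facts about the value function that I would establish first. The key ingredient is the \emph{scaling property}: for any $g\in C(x)$ and any $P\in A$ one has $\val_\alpha(gP)=\nu_\alpha(g)+\val_\alpha(P)$. This follows directly from the definition $\val_\alpha(P)=\min_{i}\nu_\alpha(P\cdot f_i)$, since $(gP)\cdot f_i=g\,(P\cdot f_i)$ and multiplying a generalized series by a rational function merely shifts its local exponent by the \emph{integer} $\nu_\alpha(g)$; taking the minimum over $i$ gives the claim. The second fact is that the local integral basis $\{\omega_1,\dots,\omega_n\}$, besides being a $C(x)_\alpha$-basis of the module $\OO_\alpha$ of elements integral at $\alpha$, is also a $C(x)$-basis of $A$, so that the representation $f=\sum_i f_i\omega_i$ is unique. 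I would check $C(x)$-independence by clearing poles at $\alpha$ via a power of the local uniformizer $z=x-\alpha$ (resp.\ $z=1/x$ at infinity) and then invoking $C(x)_\alpha$-independence. In particular each $\omega_i$ is integral, i.e.\ $\val_\alpha(\omega_i)\ge 0$.

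For part (1), the direction ``$\Leftarrow$'' is immediate from the ultrametric inequality $\val_\alpha(P+Q)\ge\min(\val_\alpha(P),\val_\alpha(Q))$, which itself reflects that $\nu_\alpha$ is a valuation on series: if every $f_i\omega_i$ is integral then $\val_\alpha(f)\ge\min_i\val_\alpha(f_i\omega_i)\ge 0$. For ``$\Rightarrow$'', if $f$ is integral then $f\in\OO_\alpha$, so $f$ is a $C(x)_\alpha$-combination of the $\omega_i$; by the uniqueness of the representation this forces $f_i\in C(x)_\alpha$, and then $\val_\alpha(f_i\omega_i)=\nu_\alpha(f_i)+\val_\alpha(\omega_i)\ge 0$ by the scaling property.

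For part (2), set $m=\min_i\nu_\alpha(f_i)\in\set Z$, so that $\nu_\alpha(z)=1$. Consider $g=z^{-m}f=\sum_i(z^{-m}f_i)\omega_i$: each coefficient $z^{-m}f_i$ now lies in $C(x)_\alpha$, and at least one has valuation $0$. By part (1) applied to $g$, the element $g$ is integral, so $\val_\alpha(g)\ge 0$; the scaling property then gives $\val_\alpha(f)=m+\val_\alpha(g)\ge m$, whence $\lfloor\val_\alpha(f)\rfloor\ge m$ because $m$ is an integer. For the reverse inequality I would argue by contradiction: if $\val_\alpha(f)\ge m+1$, then $\val_\alpha(g)\ge 1$, so $g/z$ is integral and hence lies in $\OO_\alpha$; by uniqueness its coefficients $z^{-m-1}f_i$ all lie in $C(x)_\alpha$, forcing $\nu_\alpha(f_i)\ge m+1$ for every $i$ and contradicting the choice of $m$. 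Therefore $\lfloor\val_\alpha(f)\rfloor=m$.

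The step I expect to require the most care is the scaling property together with the bookkeeping that separates the real-valued $\val_\alpha$ from the integer-valued $\nu_\alpha$: the floor in part (2) appears precisely because $\val_\alpha(g)$ may be a non-integer real number in $[0,1)$, whereas the coefficient valuations are integers. Verifying that multiplication by a rational function shifts the local exponent by an integer, and not by some fractional amount coming from the ramification parameter $s$, is exactly what makes the passage between the two valuations clean and justifies the floor.
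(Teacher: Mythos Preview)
Your proof is correct and follows essentially the same approach as the paper: part~(1) is proved by invoking the uniqueness of the representation in the $C(x)$-basis $\{\omega_1,\dots,\omega_n\}$ to force $f_i\in C(x)_\alpha$, and part~(2) by multiplying through by $z^{-m}$ to reduce to the integral case and then arguing by contradiction that $z^{-(m+1)}f$ cannot be integral. The only difference is cosmetic: you isolate the scaling property $\val_\alpha(gP)=\nu_\alpha(g)+\val_\alpha(P)$ and the $C(x)$-independence of the basis as explicit preliminaries, whereas the paper uses both facts tacitly.
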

\begin{proof}
	$(1)$: The direction ``$\Leftarrow$'' is obvious. To show ``$\Rightarrow$'', suppose
	that $f$ is integral at~$\alpha$. Then there exist $\tilde{f}_1,\dots,\tilde{f}_n\in C(x)_\alpha$ such that
	$f=\sum_{i=1}^n\tilde{f}_i\omega_i$. Thus $\sum_{i=1}^n(\tilde{f}_i-f_i)\omega_i=0$, and then
	$\tilde{f}_i=f_i$ for all~$i$, because $\{\omega_1,\dots,\omega_n\}$ is a $C(x)$-vector space of~$A$.
	As elements of $C(x)_\alpha$, the $f_i$'s are integral at~$\alpha$. Hence the $f_i\omega_i$'s
	are integral at~$\alpha$.
	
	$(2)$: Let $\tau :=\min_{i=1}^n \nu_\alpha(f_i)$. We have to show that $\tau$ is an integer such that
	\[\tau \leq \val_\alpha(f)< \tau+1.\]
	Let $z\in \bar C(x)$ with $\nu_\a(z)=1$. Since $z^{-\tau}f_i\omega_i$ is integral at $\a$, we have $z^{-\tau}f$ is integral at $\alpha$. Thus $\val_\alpha (z^{-\tau}f) =\val_\alpha(f)-\tau \geq 0$, which implies $\tau \leq\val_\alpha(f)$. On the other hand, suppose $\val_\alpha(f)\geq \tau+1$. Then $z^{-(\tau+1)}f$ is integral at $\alpha$. However, $z^{-(\tau+1)}f$ does not belong to the $ C(x)_\alpha$-module generated by $
	\{\omega_1, \ldots, \omega_n\}$ because there is $i\in\{1,\ldots, n\}$ such that $\tau=\nu_\alpha(f_i)$ and then $z^{-(\tau+1)}f_i \notin  C(x)_\alpha$. This contradicts the fact that $\{\omega_1,\ldots,\omega_n\}$ is a local integral basis at $\alpha$.
\end{proof}
Let $W=(\omega_1,\dots,\omega_n)$ be a vector space basis of~$A$ over $C(x)$. For $f\in A$, denote its derivative $D\cdot f$ by $f'$. Let $e\in C[x]$ be and $M=(m_{i,j})_{i,j=1}^n\in C[x]^{n\times n}$ be such that $eW'=MW$ and $\gcd(e, m_{1,1}, m_{1,2},\ldots,m_{n,n})=1$. If $W$ is an integral basis and $L$ is Fuchsian at all finite places, then $e$ must be squarefree, see~\cite[Lemma 3]{chen17a}. If $W$ is a local integral basis at infinity and $L$ is Fuchsian at infinity, then $\deg_x(m_{i,j})<\deg_x(e)$ for all $i,j$, see~\cite[Lemma 4]{chen17a}. However, these two facts are no longer true in the non-Fuchsian case, as the following examples show:
\begin{example}\label{Ex:non-fuch at 0}
	The operator $L = x^3 D^2 + (3x^2+2) D\in\set C(x)[D]$ has only one singular point $0$ in $\set C$, which is an irregular singular point. Infinity is a regular singular point. So $L$ is Fuchsian at all points in $\set C\cup\{\infty\}$ except $0$ and this implies that $L$ is a non-Fuchsian operator. At the point $0$, there are two linearly independent solutions $y_1(x) = 1$ and $y_2(x) = \exp(x^{-2})$ in $\sol_0(L)$. An integral basis for $A=\set C(x)[D]/\<L>$ is given by $\omega_1 = 1$ and $\omega_2 = x^3 D$, which is also a local integral basis at infinity. Then
	\begin{equation*}
		\begin{pmatrix}
			\omega_1'\\
			\omega_2'
		\end{pmatrix} = \frac{1}{e}
		\begin{pmatrix}
			0&1\\
			0&-2\\
		\end{pmatrix}
		\begin{pmatrix}
			\omega_1\\
			\omega_2
		\end{pmatrix}
	\end{equation*}
	with $e=x^3$. In this example, $e$ is not squarefree.
\end{example}
\begin{example}\label{Ex:non-fuch at infty}
	Let  $L = x D^2 - (3x^3+2) D\in\set C(x)[D]$. Infinity is an irregular singular point. So $L$ is not Fuchsian at infinity. There are two linearly independent solutions $y_1(x) = 1$ and $y_2(x) = \exp(x^3)$ in $\sol_\infty(L)$. A local integral basis at infinity of $A=\set C(x)[D]/\<L>$ is given by $\omega_1 = 1$ and $\omega_2 = x^{-2}D$. Then
	\begin{equation*}
		\begin{pmatrix}
			\omega_1'\\
			\omega_2'
		\end{pmatrix} =
		\begin{pmatrix}
			0&x^2\\
			0&3x^2\\
		\end{pmatrix}
		\begin{pmatrix}
			\omega_1\\
			\omega_2
		\end{pmatrix}.
	\end{equation*}
	In this example, $e=1$ and the condition $\deg_x(m_{i,j}) < \deg_x(e)$ fails.
	
\end{example}

A $C(x)$-vector space basis $\{\omega_1,\dots,\omega_n\}$ of $A=C(x)[D]/\<L>$ is
called \emph{normal} at $\a\in \bar C\cup\{\infty\}$ if there exist $r_1,\dots,r_n\in
C(x)$ such that $\{r_1\omega_1,\dots,r_n\omega_n\}$ is a local integral basis at~$\a$. Given an integral basis and a local integral basis at infinity, Trager~\cite{Trager84} presented an algorithm for computing an integral basis that is normal at infinity in the algebraic function field. The same procedure also applies in the present situation, see~\cite[Algorithm 5.20]{Aldossari20}. 

\section{Hermite Reduction}\label{sec:HR}

Hermite reduction, first introduced by Ostrogradsky in 1845~\cite{Ostrogradsky1845}, is a classical symbolic integration technique that reduces rational functions to integrands with only simple poles. Hermite reduction was extended by Trager~\cite{Trager84} from the field of rational functions to that of algebraic functions via integral bases. Trager's Hermite reduction solved the integration problem for algebraic functions. This work
was extended to the case of Fuchsian D-finite functions~\cite{chen17a}.
We shall further extend Hermite reduction to general D-finite functions, including the non-Fuchsian case. To reduce the pole order at infinity, we develop a Hermite reduction at infinity for D-finite functions, which plays the same role as polynomial reduction~\cite{bostan13a,chen16a,chen17a,chen21b}. 
In this section, Hermite reduction at finite places and at infinity are formulated in the same framework. More precisely, we shall use a local integral basis at $\alpha\in \bar C\cup\{\infty\}$ to reduce the pole orders of D-finite functions at~$\alpha$.

For convenience, we introduce some notations for the valuations of a matrix with rational coefficients. For each $\alpha\in \bar C\cup\{\infty\}$, the {\em valuation} of a matrix $T\in C(x)^{n\times n}$ at~$\alpha$, denoted by $\nu_\alpha(T)$, is defined as the minimal valuation at $\alpha$ of all entries in this matrix. The {\em degree} of $T\in C(x)^{n\times n}$, denoted by $\deg_x(T)$, is defined as $-\nu_\infty(T)$. In particular, the degree of a rational function $f=p/q\in C(x)$ is $\deg_x(p)-\deg_x(q)$.

\subsection{The Local Case}\label{sec:local}
Let $L\in C(x)[D]$ be of order $n$ and let $A= C(x)[D]/\<L>$. For an arbitrary but fixed point $\alpha\in \bar C\cup\{\infty\}$, let $W=(\omega_1,\dots,\omega_n)$
be a local integral basis at $\alpha$ of~$A$ and then there exists a matrix $T\in C(x)^{n\times n}$ such that $W'=TW$. We write $z=x-\alpha$ (or $z=\tfrac{1}{x}$ if $\alpha=\infty$).
Let $\lambda= -\nu_\alpha(T)$ be the pole order of $T$ at $\alpha$. Then $\lambda\in \set Z$ and there exists a matrix $M=(m_{i,j})_{i,j=1}^n\in \bar C(x)_\alpha^{n\times n}$ such that
\[W'=\frac{1}{z^\lambda} MW\quad\text{and} \quad\nu_{\alpha}(M)=0,\]
where $M=z^{\lambda}T$.
Let $f=\frac{1}{z^d}\sum_{i=1}^n a_i\omega_i\in A$ with
$d>1$ (or $d\geq 0$ if $\a=\infty$) and $a_1,\dots,a_n\in \bar C(x)_\alpha$. 
In order to reduce the pole order $d$ of $f$ at $\alpha$, we seek
$b_1,\dots,b_n,c_1,\dots,c_n\in \bar C(x)_\alpha$ such that
\begin{equation}\label{EQ:hr-goal}
	\frac{1}{z^d}\sum_{i=1}^na_i\omega_i =\left(\frac{1}{z^{d+\mu}}\sum_{i=1}^nb_i\omega_i\right)'+ \frac{1}{z^{d-1}}\sum_{i=1}^nc_i\omega_i,
\end{equation}
where $\mu\in \set Z$ is an integer such that $\nu_\alpha(z') =\nu_\alpha(z)+\mu$. In this setting, $\mu = -1$ if $\a\in \bar C$ (because $(x-\a)'=1$); $\mu=1$ if $\a=\infty$ (because $(\tfrac{1}{x})'=-\frac{1}{x^2}$). Also $z'=-\mu z^{\mu+1}$.

After differentiating both sides of~\eqref{EQ:hr-goal} and multiplying by $z^d$, we get
\begin{align}\label{EQ:hr-expand1}
	\sum_{i=1}^n a_i \omega_i
	&=\sum_{i=1}^n \left(\frac{b_i^\prime}{ z^\mu}\omega_i + b_iz^{d}\left(\frac{\omega_i}{z^{d+\mu}}\right)^\prime + c_iz \omega_i\right)\\\label{EQ:hr-expand2}
	&=\sum_{i=1}^n \left(\frac{b_i^\prime}{ z^\mu}\omega_i + \frac{b_i}{z^{\lambda+\mu}} \sum_{j=1}^nm_{i,j} \omega_j +\mu (d+\mu)b_i\omega_i + c_iz \omega_i\right),
\end{align}
where $\mu (d+\mu) = z^d (z^{-(d+\mu)})'$. Note that $b_i$ is integral at $\alpha$. Then $b_i'z^{-\mu}\in z \bar C(x)_\alpha$ because $\nu_\a(b_i'z^{-\mu})\geq 1$. For example, if $\a\in \bar C$, then $(1+ (x-\a)+\cdots)'(x-\a)=(x-\a)+\cdots$; if $\a=\infty$, then $(1+\frac{1}{x}+\cdots)'x=-\frac{1}{x}+\cdots$. So if $-(\lambda +\mu) >0$, i.e., $\lambda <-\mu $, then Equation~\eqref{EQ:hr-expand2} can be reduced modulo $z$:
\begin{equation}\label{EQ:hr-simple-case}
	\sum_{i=1}^n a_i \omega_i \equiv \sum_{i=1}^n \mu(d+\mu)b_i\omega_i\mod z.
\end{equation}
It follows that $b_i \equiv \mu(d+\mu)^{-1} a_i\mod z$ is the unique solution of~\eqref{EQ:hr-simple-case} in $\bar C(x)_\a/\<z>$.
If $\lambda\geq -\mu$, then multiplying~\eqref{EQ:hr-expand1} by $z^{\lambda+\mu}$ and reducing this equation modulo $z^{\lambda+\mu+1}$ yields
\begin{equation}\label{EQ:hr-mod}
	\sum_{i=1}^n z^{\lambda+\mu} a_i\omega_i \equiv \sum_{i=1}^n b_iz^{d+\lambda+\mu}\left(\frac{\omega_i}{z^{d+\mu}}\right)^\prime \mod z^{\lambda+\mu+1}.
\end{equation}
Let $\psi_i := z^{d+\lambda+\mu}\left(\frac{\omega_i}{z^{d+\mu}}\right)^\prime$ for $i=1,\ldots, n$. To perform Hermite reduction, we have to show that~\eqref{EQ:hr-mod}
always has a solution $(b_1,\ldots, b_n)$ in $\left(\bar C(x)_\a/\<z^{\lambda+\mu+1}>\right)^n$.

In the Fuchsian case, Chen et al.~\cite{chen17a} proved that $\lambda =1$. When $\a\in \bar C$, they showed that $\{\psi_1, \ldots, \psi_n\}$ forms a local integral basis at $\a$ and hence~\eqref{EQ:hr-mod} has a solution and that solution is unique. When $\a=\infty$, instead of solving the modular system~\eqref{EQ:hr-mod}, they introduced the polynomial reduction to reduce the degree in~$x$. We shall show that $\{\psi_1, \ldots, \psi_n\}$ still forms a local integral basis at infinity. Then the polynomial reduction can be formulated as Hermite reduction at infinity, as suspected by one of the anonymous referees of~\cite{chen17a}.

In the non-Fuchsian case, it may happen that $\lambda>1$, see Examples \ref{Ex:non-fuch at 0} (for $\a=0$) and~\ref{Ex:non-fuch at infty} (for $\a=\infty$). Another difference is that $\{\psi_1, \ldots, \psi_n\}$ may not be a local integral basis at $\a$ anymore, see the following Example~\ref{Ex:psi}. Fortunately, the linear system~\eqref{EQ:hr-mod} still has a unique solution in $\left(\bar C(x)_\a/\<z^{\lambda+\mu+1}>\right)^n$ as we shall prove in this section. There are two steps. First we show that $\{\psi_1, \ldots, \psi_n\}$ is linearly independent over $\bar C(x)$ and then we find a rational solution $(b_1,\ldots, b_n)$ whose entries admit nonnegative valuation at $\a$. So the $b_i$'s belong to $\bar C(x)_\a$. Taking $b_i$ modulo $z^{\lambda +\mu +1}$ gives a unique solution of~\eqref{EQ:hr-mod}. Once we know that~\eqref{EQ:hr-mod} has a solution, equating the coefficients of the $\omega_i$'s on both sides, we can find its solution $b=(b_1, \ldots, b_n)$ by solving the following linear system of congruence equations:
\begin{equation}\label{EQ:hr-matrix}
	(z^{\lambda+\mu} a_1,\ldots, z^{\lambda +\mu }a_n) \equiv b(M +\mu(d+\mu)z^{\lambda+\mu}I_n) \mod z^{\lambda+\mu+1},
\end{equation}
where $I_n$ is the identity matrix in $C[x]^{n\times n}$.

\begin{example}\label{Ex:psi}
	Continue Example~\ref{Ex:non-fuch at 0}. For $\a=0$ and $\lambda=3$, let $\psi_i = x^{d+2}(x^{1-d}\omega_i)'$ for $i=1,2$. A direct calculation yields that
	\begin{equation}\label{EQ:psi}
		\begin{pmatrix}
			\psi_1\\
			\psi_2
		\end{pmatrix}
		= \begin{pmatrix}
			-(d-1)x^2 &1\\
			0&-(d-1)x^2-2
		\end{pmatrix}
		\begin{pmatrix}
			\omega_1\\
			\omega_2
		\end{pmatrix}.
	\end{equation}
	In this example, $\psi_1,\psi_2$ are integral elements but do not form a local integral basis at $0$, because $\frac{1}{x^2}(2\psi_1+\psi_2)=-2(d-1)\omega_1-(d-1) \omega_2$ is integral at $0$. In fact, if $d>1$, then $\{\psi_1,\frac{1}{x^2}(2\psi_1+\psi_2)\}$ is a local integral basis at $0$. Now~\eqref{EQ:hr-matrix} becomes
	\[(a_1x^2, a_2x^2)\equiv (b_1, b_2)\begin{pmatrix}
		-(d-1)x^2&1\\
		0&-(d-1)x^2-2\\
	\end{pmatrix}\mod x^3.\]
	When $d>1$, even though the coefficient matrix is not invertible over $\bar C(x)_0/\<x^3>$, this equation still has a unique solution
	\[\left\{\begin{aligned}
		&b_1\equiv -(d-1)^{-1} a_1\mod x^3,\\
		&b_2\equiv \frac{1}{4}\left((d-1)x^2-2\right)\left(a_2x^2 +(d-1)^{-1}a_1\right) \mod x^3.
	\end{aligned}\right.\]
\end{example}

Let $\OO_\alpha$ denote the set of all elements in $\bar C(x)[D]/\<L> $ that are locally integral at $\alpha\in  \bar C\cup\{\infty\}$. Even though $\{\psi_1, \ldots, \psi_n\}$ may not be a local integral basis at $\a$, it is not so far away. In Example~\ref{Ex:psi}, we have $\bar C(x)_0 \psi_1 + \bar C(x)_0 \psi_2 \subseteq \OO_0 \subseteq \frac{1}{x^{2}} (\bar C(x)_0 \psi_1 +\bar  C(x)_0 \psi_2)$. In general, if we represent a locally integral element at $\a$ as a linear combination of $\{\psi_1,\ldots, \psi_n\}$ with coefficients in $\bar C(x)$, the pole orders at $\a$ of these coefficients are at most $\lambda+\mu$.

\begin{prop}\label{Prop:psi}
	Let $\a\in \bar C\cup \{\infty\}$ and $W=\{\omega_1, \ldots, \omega_n\}$ be a local integral basis at $\a$ of $A$. Let $z= x-\a$ (or $z=x^{-1}$ if $\a = \infty$) and $\mu\in \set Z$ be such that $\nu_\a(z') = \nu_\a(z)+\mu$. Let $\lambda\in \set Z$ and $M\in  \bar C(x)_\a^{n\times n}$ be such that $z^\lambda W^\prime =MW$ and $\nu_\a(M)=0$. For some integer $d>1$ (or $d\geq 0$ if $\a=\infty$), we define $\psi_i := z^{d+\lambda+\mu}(z^{-d-\mu}\omega_i)'$. If $\lambda \geq -\mu$, then
	\[\sum_{i=1}^n \bar C(x)_\a \psi_i \subseteq \OO_\a \subseteq \frac{1}{z^{\lambda+\mu}}\sum_{i=1}^n \bar  C(x)_\a \psi_i.\]
	In particular, when $\lambda=-\mu$, we have $\sum_{i=1}^n \bar C(x)_\a\psi_i =\OO_\a$. In this case, $\{\psi_1,\ldots, \psi_n\}$ forms a local integral basis at $\a$.
\end{prop}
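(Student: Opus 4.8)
The plan is to turn the definition of the $\psi_i$ into a single matrix identity and then read both inclusions off from linear algebra over the discrete valuation ring $\bar C(x)_\a$. First I would expand $\psi_i=z^{d+\lambda+\mu}(z^{-d-\mu}\omega_i)'$ with the product rule, using $z'=-\mu z^{\mu+1}$ and $z^\lambda\omega_i'=\sum_j m_{i,j}\omega_j$; the powers of $z$ combine and leave
\[\psi_i=\sum_{j=1}^n N_{ij}\,\omega_j,\qquad N:=M+\mu(d+\mu)\,z^{\lambda+\mu}I_n.\]
Since $\nu_\a(M)=0$ and $\lambda+\mu\ge0$, the matrix $N$ has entries in $\bar C(x)_\a$; it is exactly the coefficient matrix of the congruence system~\eqref{EQ:hr-matrix}. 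By Lemma~\ref{Lem:pole} the local integral basis identifies $\OO_\a$ with $\bigoplus_i\bar C(x)_\a\,\omega_i$, so the first inclusion is immediate: each $\psi_i$ is an integral combination of the $\omega_j$, whence $\sum_i\bar C(x)_\a\psi_i\subseteq\OO_\a$.

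For the second inclusion I would rewrite it as $z^{\lambda+\mu}\OO_\a\subseteq\sum_i\bar C(x)_\a\psi_i$. Writing $\Psi=NW$, where $\Psi$ and $W$ are the column vectors of the $\psi_i$ and $\omega_i$, and using $\OO_\a=\bigoplus_i\bar C(x)_\a\omega_i$ once more, this is equivalent to the single statement $\nu_\a(z^{\lambda+\mu}N^{-1})\ge0$. This presupposes $\det N\ne0$, i.e.\ that $\{\psi_1,\dots,\psi_n\}$ is linearly independent over $\bar C(x)$, which I would establish first. Fixing a basis $f_1,\dots,f_n$ of $\sol_\a(L)$ and applying $\Psi=NW$ to it gives $(\psi_i\cdot f_k)_{i,k}=N\,(\omega_i\cdot f_k)_{i,k}$, so that $\det(\psi_i\cdot f_k)=\det N\cdot\det(\omega_i\cdot f_k)$. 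The generalized Wronskian $\det(\omega_i\cdot f_k)$ is nonzero because $\{\omega_i\}$ is a $\bar C(x)$-basis and the $f_k$ are independent solutions, so the problem reduces to $\det(\psi_i\cdot f_k)\ne0$, where $\psi_i\cdot f_k=z^{d+\lambda+\mu}\bigl(z^{-d-\mu}(\omega_i\cdot f_k)\bigr)'$. I would argue this determinant is nonzero by inspecting the lowest-order terms of the generalized series $\omega_i\cdot f_k$: the twist by $z^{-d-\mu}$, together with the hypothesis $d>1$ (resp.\ $d\ge0$ at infinity), keeps the relevant exponents off the single value that the differentiation could annihilate, so no nontrivial combination is killed and $\det N\ne0$.

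The valuation bound $\nu_\a(z^{\lambda+\mu}N^{-1})\ge0$ is the heart of the matter and the step I expect to be the main obstacle, since---as Example~\ref{Ex:psi} shows---$\{\psi_i\}$ need not be a local integral basis and $N$ need not be unimodular once $\lambda>-\mu$. Working over $\bar C(x)_\a$ with uniformizer $z$, I would pass to the adjugate: $z^{\lambda+\mu}N^{-1}=z^{\lambda+\mu}\operatorname{adj}(N)/\det N$, and because $N$ is integral so is $\operatorname{adj}(N)$; the bound then follows once $\nu_\a(\det N)\le\lambda+\mu$. Expanding
\[\det N=\sum_{j=0}^n\bigl(\mu(d+\mu)\bigr)^{n-j}z^{(\lambda+\mu)(n-j)}E_j(M),\]
where $E_j(M)$ is the sum of the principal $j\times j$ minors of $M$, every summand is integral, and the task becomes to show that the valuations of the $E_j(M)$ cannot conspire to push $\nu_\a(\det N)$ past $\lambda+\mu$. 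This is where the full strength of the local integral basis enters: the valuations of the principal minors of $M$ are controlled by the local exponents that the basis makes explicit (cf.\ Lemma~\ref{Lem:pole}), while $\nu_\a(M)=0$ secures the leading contribution. Equivalently, in terms of the elementary divisors $z^{e_1}\le\dots\le z^{e_n}$ of $N$ over $\bar C(x)_\a$, one must prove $e_n\le\lambda+\mu$, so that $z^{\lambda+\mu}$ annihilates the torsion module $\OO_\a\big/\sum_i\bar C(x)_\a\psi_i$ and the inclusion follows.

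Finally, the case $\lambda=-\mu$ is the degenerate one where $\lambda+\mu=0$: the two inclusions collapse to $\sum_i\bar C(x)_\a\psi_i=\OO_\a$, the matrix $N=M+\mu(d+\mu)I_n$ is unimodular over $\bar C(x)_\a$ (all $e_i=0$), and hence $\{\psi_1,\dots,\psi_n\}$ is a $\bar C(x)_\a$-basis of $\OO_\a$, i.e.\ a local integral basis at $\a$, recovering the Fuchsian case of~\cite{chen17a}.
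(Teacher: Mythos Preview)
Your first inclusion is fine and matches the paper: writing $\Psi=NW$ with $N=M+\mu(d+\mu)z^{\lambda+\mu}I_n$ and $\nu_\a(N)\ge0$, each $\psi_i$ is integral.

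The second inclusion has a genuine gap. You correctly reduce it to the elementary-divisor bound $e_n\le\lambda+\mu$ (or, via the adjugate, to the stronger $\nu_\a(\det N)\le\lambda+\mu$), but you do not prove either. The appeal to Lemma~\ref{Lem:pole} and to $\nu_\a(M)=0$ is not an argument: Lemma~\ref{Lem:pole} controls coefficient valuations of elements of~$A$, not minors of the connection matrix, and $\nu_\a(M)=0$ pins down a single entry, not $\det M$ or any~$E_j(M)$. Your independence sketch has the same gap in different clothes---``no nontrivial combination is killed'' is exactly the statement that needs proof. The paper does not attempt to bound $N^{-1}$ algebraically. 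It argues by contradiction: if some integral $f$ lay outside $z^{-(\lambda+\mu)}\sum_i\bar C(x)_\a\psi_i$, one can normalize to $f=z^{-(\lambda+\mu+1)}\sum_ic_i\psi_i$ with $c_i\in\bar C(x)_\a$ and some $\nu_\a(c_j)=0$; adding the visibly integral correction $g=z^{-\mu-1}\sum_ic_i'\omega_i$ collapses $f+g$ to $z^{d-1}(z^{-d-\mu}h)'$ with $h=\sum_ic_i\omega_i$. Now Lemma~\ref{Lem:pole} is used exactly once, to give $0\le\val_\a(h)<1$, and the explicit dominant-term computation~\eqref{eq:T} on a generalized series solution shows that $z^{d-1}(z^{-d-\mu}h)'$ has strictly negative valuation whenever $d>1$ (resp.\ $d\ge0$ at~$\infty$), contradicting integrality of $f+g$. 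This analytic step is precisely what you gestured at for independence but tried to bypass for the inclusion; once carried out, it delivers the inclusion directly and linear independence comes for free (see Theorem~\ref{Thm:hr}), with no need for adjugates or Smith normal form.
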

\begin{proof}
	We prove this proposition using the same technique as in~\cite[Lemma 10]{chen16a}. To show $\sum_{i=1}^n \bar C(x)_\a \psi_i \subseteq \OO_\alpha$, we only need to show that for every $i=1,\ldots,n$, the element $\psi_i$ is integral at $\a$. After differentiating, we get $\psi_i = z^\lambda \omega_i^\prime +\mu(d+\mu)z^{\lambda+\mu}\omega_i$. Since $z^\lambda W^\prime = MW$ and $\nu_\a(M)=0$, it follows that $z^\lambda\omega_i'$ is integral at $\alpha$. Then $\psi_i$ is integral at $\a$ because $\lambda+\mu\geq 0$.
	
	Next we shall prove $\OO_\alpha \subseteq \frac{1}{z^{\lambda+\mu}}\sum_{i=1}^n \bar C(x)_\alpha \psi_i$. Suppose to the contrary that there exists an element $f\in \OO_\alpha\setminus\frac{1}{z^{\lambda+\mu}}\sum_{i=1}^n \bar C(x)_\a\psi_i$. Furthermore, we can find such an element $f$ of the form
	\[f= \frac{1}{z^{\lambda+\mu+1}}\sum_{i=1}^n c_i \psi_i\quad\text{with } c_i\in \bar C(x)_\a \text{ and } \nu_\a(c_i)=0 \text{ for some } i.\]
	Let $g=z^{-\mu-1}\sum_{i=1}^n c_i^\prime \omega_i$, which is integral. Then also their sum
	\begin{align*}
		f+g &= z^{d-1}\sum_{i=1}^n \left(c_i\bigl(z^{-d-\mu}\omega_i\bigr)'
		+ c_i'z^{-d-\mu}\omega_i \right) \\
		&= z^{d-1} \sum_{i=1}^n \bigl(c_iz^{-d-\mu}\omega_i\bigr)'=z^{d-1}\left(z^{-d-\mu}h\right)'
	\end{align*}
	must be integral, where $h=\sum_{i=1}^nc_i\omega_i$. Since $\{\omega_1,\ldots,\omega_n\}$ is an integral basis at~$\alpha$, by Lemma~\ref{Lem:pole} we have $0 \leq \val_\a(h) <1$. There exists a generalized series solution $y_i\in\sol_\a(L)$ such that $h\cdot y_i$ involves a term
	$T=z^r \exp(p(z^{-1/s}))\log(z)^\ell$ with $0\leq r<1$, $s,\ell\in\set N$ and $p\in \bar C[x]$. For this fixed series $y_i$, let $T$ be the dominant term of $h\cdot y_i$,
	i.e., among all terms with minimal $r$ the one with the largest exponent~$\ell$. Let $k = \deg_x(p)$ and $c=\lc_x(p)$ be the degree and the leading coefficient of $p$ in $x$ respectively. Then
	\begin{align}\nonumber
		&\bigl(z^{d-1} D z^{-d-\mu}\bigr) \cdot T
		\\\nonumber
		={}&	\mu(d-r+\mu)z^{r-1}\exp(p(z^{-1/s}))\log(z)^\ell \\\nonumber
		&+\tfrac{\mu c k}{s}z^{r-\frac{k}{s}-1}\exp(p(z^{-1/s}))\log(z)^\ell +\cdots\\\label{eq:T}
		&-\mu\ell z^{r-1}\exp(p(z^{-1/s}))\log(z)^{\ell-1},
	\end{align}
	where ``$\cdots$'' denotes some terms of valuation higher than $r-\frac{k}{s}-1$.
	
	If $k=0$, then \[\mu(d-r+\mu)z^{r-1}\exp(p(z^{-1/s}))\log(z)^\ell\] is the dominant term of
	$\bigl(z^{d-1} D z^{-d-\mu}\bigr)\cdot (h\cdot y_i)$; here we use the
	assumption that $d>1$ (resp. $d\geq 0$ if $\a=\infty$), because for $d=1$ (resp. $d=-1$) and $r=0$ the
	coefficient $(d-r+\mu)$ is zero.
	
	If $k>0$, then
	\[\frac{\mu ck}{s}z^{r-\frac{k}{s}-1}\exp(p(z^{-1/s}))\log(z)^\ell\]
	is the dominant term of
	$\bigl(z^{d-1} D z^{-d+\mu}\bigr)\cdot (h\cdot y_i)$.
	
	The above calculation reveals that
	$z^{d-1}\bigl(z^{-d-\mu}h\bigr)'=f+g$ is not integral at~$\alpha$, which contradicts our assumption on
	the integrality of~$f$. Hence $\OO_\alpha \subseteq \frac{1}{z^{\lambda+\mu}}\sum_{i=1}^n \bar C(x)_\alpha \psi_i$.
\end{proof}

\begin{thm}\label{Thm:hr}
	Use the same notations as in Proposition~\ref{Prop:psi}. Let $d>1$ (or $d\geq0$ if $\alpha=\infty$). If $\lambda \geq -\mu$, then for any $a_1,\ldots, a_n\in \bar C(x)_\a$, the linear system
	\begin{equation}\label{EQ:linear-sys}
		\sum_{i=1}^n z^{\lambda+\mu}a_i \omega_i = \sum_{i=1}^n b_i\psi_i
	\end{equation}
	has a unique solution $(b_1, \ldots, b_n)$ in $\left(\bar C(x)_\a/\<z^{\lambda+\mu+1}>\right)^n$.
\end{thm}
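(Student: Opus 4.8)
The plan is to derive the whole statement from the two-sided inclusion established in Proposition~\ref{Prop:psi}, namely
\[
\sum_{i=1}^n \bar C(x)_\alpha\,\psi_i \;\subseteq\; \OO_\alpha \;\subseteq\; \frac{1}{z^{\lambda+\mu}}\sum_{i=1}^n \bar C(x)_\alpha\,\psi_i ,
\]
which is available precisely because $\lambda\ge -\mu$ and $d$ lies in the admissible range. All three ingredients of the theorem (uniqueness, existence, and integrality of the coefficients) can be read off from this sandwich, so no fresh differential computation is needed; the analytic work has already been carried out in the proposition.

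First I would establish that $\psi_1,\dots,\psi_n$ are linearly independent over $\bar C(x)$, which is what makes \eqref{EQ:linear-sys} uniquely solvable. Since $\{\omega_1,\dots,\omega_n\}$ is a $\bar C(x)$-vector space basis of $A$ and each $\omega_i$ lies in $\OO_\alpha$, the module $\OO_\alpha$ spans $A$ over $\bar C(x)$. The right inclusion above gives $\OO_\alpha\subseteq\sum_i\bar C(x)\psi_i$, hence $A=\sum_i\bar C(x)\psi_i$; as $\dim_{\bar C(x)}A=n$, the $n$ elements $\psi_1,\dots,\psi_n$ form a basis and are in particular linearly independent. Consequently \eqref{EQ:linear-sys} has at most one solution in $\bar C(x)^n$.

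Next I would produce an integral solution. Writing $G:=\sum_{i=1}^n a_i\omega_i$, the hypothesis $a_i\in\bar C(x)_\alpha$ together with $\{\omega_i\}$ being a local integral basis gives $G\in\OO_\alpha$, and the left-hand side of \eqref{EQ:linear-sys} equals $z^{\lambda+\mu}G$. Applying the right inclusion to $G$ and multiplying by $z^{\lambda+\mu}$ yields $z^{\lambda+\mu}G\in\sum_i\bar C(x)_\alpha\psi_i$; that is, there exist $b_1,\dots,b_n\in\bar C(x)_\alpha$ with $\sum_i z^{\lambda+\mu}a_i\omega_i=\sum_i b_i\psi_i$. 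By the uniqueness just shown these are the only rational solution, and being integral at $\alpha$ their residues modulo $z^{\lambda+\mu+1}$ furnish the asserted element of $(\bar C(x)_\alpha/\langle z^{\lambda+\mu+1}\rangle)^n$; reducing the identity modulo $z^{\lambda+\mu+1}$ recovers \eqref{EQ:hr-mod} and \eqref{EQ:hr-matrix}.

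The genuinely delicate point is not the existence but the correct reading of ``unique solution in $(\bar C(x)_\alpha/\langle z^{\lambda+\mu+1}\rangle)^n$''. Taken on its own, the congruential system \eqref{EQ:hr-matrix} is underdetermined once $\lambda+\mu>0$, because its coefficient matrix $M+\mu(d+\mu)z^{\lambda+\mu}I_n$ need not be invertible modulo $z^{\lambda+\mu+1}$ (exactly the phenomenon in Example~\ref{Ex:psi}); so uniqueness cannot be obtained by inverting that matrix. The uniqueness I claim is that of the solution of the \emph{exact} identity \eqref{EQ:linear-sys}, forced by the linear independence of the $\psi_i$, and only then transported to the quotient. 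Thus the one thing to be careful about in the write-up is to treat \eqref{EQ:linear-sys} as an exact equation in $A$, derive uniqueness there, and pass to residues modulo $z^{\lambda+\mu+1}$ afterwards; the main conceptual obstacle, namely controlling how far $\{\psi_i\}$ is from being a local integral basis, has already been surmounted in Proposition~\ref{Prop:psi}.
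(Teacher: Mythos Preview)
Your proof is correct and follows essentially the same route as the paper's: both derive linear independence of the $\psi_i$ from the right inclusion in Proposition~\ref{Prop:psi} (you via a spanning/dimension argument, the paper via the rank of $\OO_\alpha$), then use that same inclusion applied to the integral element $\sum_i a_i\omega_i$ to force the unique $\bar C(x)$-solution into $\bar C(x)_\alpha^n$, and finally pass to residues. Your closing remark that uniqueness is established for the exact identity and only afterwards transported to the quotient is exactly the reading the paper's proof implements.
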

\begin{proof}
	By Proposition~\ref{Prop:psi}, the $\bar C(x)_\a$-module generated by \[\left\{\frac{1}{z^{\lambda+\mu}}\psi_1, \ldots, \frac{1}{z^{\lambda+\mu}}\psi_n\right\}\]
	contains a submodule $\OO_\a$ of rank $n$. So $\{\psi_1, \ldots, \psi_n\}$ is linearly independent over $\bar C(x)$. Then there exist unique $t_1,\ldots, t_n\in \bar C(x)$ such that
	$\sum_{i=1}^n z^{\lambda+\mu} a_i \omega_i =\sum_{i=1}^n t_i\psi_i$. 
	
	To find a solution $b_i$, we have to show that $t_i\in \bar C(x)_\a$ for all $i=1,\ldots, n$. If so, $b_i\equiv t_i\mod z^{\lambda+\mu+1}$ is the unique solution of~\eqref{EQ:linear-sys}. Since $a_i\in\bar C(x)_\a$ and the $\omega_i$'s are integral at $\a$, the element
	\[\sum_{i=1}^na_i\omega_i =\frac{1}{z^{\lambda+\mu}}\sum_{i=1}^nt_i\psi_i\]
	is integral at $\a$. By Proposition~\ref{Prop:psi},
	\[\frac{1}{z^{\lambda+\mu}}\sum_{i=1}^nt_i\psi_i \in \OO_\a \subseteq \frac{1}{z^{\lambda+\mu}}\sum_{i=1}^n\bar C(x)_\a \psi_i.\]
	Then $\sum_{i=1}^nt_i\psi_i \in \sum_{i=1}^n\bar C(x)_\a \psi_i.$ Since $\{\psi_1, \ldots, \psi_n\}$ are linearly independent over $\bar C(x)$, we have $t_i\in  \bar C(x)_\a$ for all $i$. Thus $t_i\in \bar C(x)_\a$ as claimed.
\end{proof}

According to Theorem~\ref{Thm:hr}, we can perform one step of Hermite reduction for D-finite functions as described in the beginning of this section. The element $b_i$ in $\bar C(x)_\a/\<z^{\lambda+\mu+1}>$ is of the form
\[b_i = b_{i,0} + b_{i,1} z + \cdots + b_{i, \lambda + \mu} z^{\lambda + \mu} \quad \text{with}\quad b_{i,j}\in \bar C.\]
So in Equation~\eqref{EQ:hr-goal}, for $\alpha\in \bar C$, if $d > \max\{1, \lambda\}$, then we can guarantee that the coefficients of $\frac{1}{z^{d+\mu}}\sum_{i=1}^nb_i\omega_i$ are proper rational functions. For $\alpha=\infty$, if $d \geq \max\{0, \lambda\}$, then the coefficients of $\frac{1}{z^{d+\mu}}\sum_{i=1}^nb_i\omega_i$ are polynomials. 

\begin{example}\label{Ex:hr-finite}
	Continue Examples \ref{Ex:non-fuch at 0} and~\ref{Ex:psi}. A local integral basis at $\a=0$ is given by $\omega_1=1$ and $\omega_2=x^3D$. Then $\lambda =3$. Consider the D-finite function
	\[f = \frac{(-2x^2-x^4)\omega_1 + (-2+3x^2-3x^4)\omega_2}{x^4}\]
	and use Hermite reduction at $0$ to reduce the power of $x$ in its denominator. So we start with $z=x$, $\mu=-1$, $d=4$, $a_1 = -2x^2-x^4$ and $a_2=-2+3x^2-3x^4$. From~\eqref{EQ:hr-matrix}, we get
	\[(a_1x^2, a_2x^2)\equiv (b_1, b_2)\begin{pmatrix}
		-3x^2&1\\
		0&-3x^2-2\\
	\end{pmatrix}\mod x^3.\]
	By Theorem~\ref{Thm:hr}, we know that this equation has a unique solution. Indeed, we find a solution $b_1= \tfrac{2}{3}x^2$, $b_2=\tfrac{4}{3}x^2$. Then one step of the Hermite reduction at $0$ simplifies $f$ to
	\[f=\left(\frac{2\omega_1+4\omega_2}{3x}\right)'+\frac{(-4-3x^2)\omega_1+(13-9x^2)\omega_2}{3x^2}.\]
\end{example}

\begin{example}\label{Ex:hr-infty}
	Let  $L = x D^2 - (3x^3+2) D\in\set C(x)[D]$ be the same operator as in Example~\ref{Ex:non-fuch at infty}. A local integral basis at $\a=\infty$ of $A=\set C(x)[D]/\<L>$ is given by $\omega_1 = 1$ and $\omega_2 = x^{-2}D$. Then
	\begin{equation*}
		\begin{pmatrix}
			\omega_1'\\
			\omega_2'
		\end{pmatrix} =
		x^\lambda\begin{pmatrix}
			0&1\\
			0&3\\
		\end{pmatrix}
		\begin{pmatrix}
			\omega_1\\
			\omega_2
		\end{pmatrix}
	\end{equation*}
	with $\lambda=2$. Consider the D-finite function
	\[f = 4x^3 + \frac{1}{x} D =4x^3\omega_1 + x\omega_2=x^3\left(4\omega_1 + \frac{1}{x^2}\omega_2\right)\]
	and use Hermite reduction at infinity to reduce its degree in $x$. So we start with $z=\tfrac{1}{x}$, $\mu=1$, $d=3$, $a_1 = 4$ and $a_2=\tfrac{1}{x^2}$. From~\eqref{EQ:hr-matrix}, we get 
	\[(a_1x^{-3}, a_2x^{-3})\equiv (b_1, b_2)\begin{pmatrix}
		4x^{-3}&1\\
		0&4x^{-3}+3\\
	\end{pmatrix}\mod x^{-4}\]
	This coefficient matrix is not invertible over $C(x)_{\infty}/\<x^{-4}>$. However, by Theorem~\ref{Thm:hr}, we know this equation has a unique solution. Indeed, we find a solution $b_1= 1$, $b_2=\tfrac{4}{9x^3}-\tfrac{1}{3}$. Then one step of the Hermite reduction at infinity simplifies $f$ to
	\[f=\left(x^4\omega_1+\left(\frac{4}{9}x-\frac{1}{3}x^4\right)\omega_2\right)'+\left(x -\frac{4}{9}\right)\omega_2.\]	
	
\end{example}

For a rational function $g\in C(x)$ and any point $\a\in \bar C$, if $\nu_\a(g)\neq 0$, we have $\nu_\a(g') = \nu_\a(g)-1$.
So the pole order of a rational function increases by exactly one under each derivation. In the D-finite case, the pole order increases by at least one. A lower bound of its pole order under each derivation is given in~\cite[Lemma 5.7]{Aldossari20}.
\begin{lemma}\label{Lem:val_under_der}
	Let $g\in A$. For any $\a\in  \bar C\cup\{\infty\}$, if $\val_\a(g)\neq 0$, then $\val_\a(g') \leq \val_\a(g)+\mu$, where $\mu=-1$ if $\alpha\in \bar C$ and $\mu = 1$ if $\alpha = \infty$.
\end{lemma}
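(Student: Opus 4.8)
The plan is to reduce this statement, which concerns the value function $\val_\a$, to a purely local statement about a single generalized series and its derivative, and then to inspect that derivative term by term.

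First I would pass from $g$ to one series. Fix a basis $f_1,\dots,f_n$ of $\sol_\a(L)$ and choose an index $i_0$ with $\nu_\a(g\cdot f_{i_0})=\val_\a(g)=:v$. Since $g'=D\cdot g$ acts as $\tfrac{d}{dx}$ on the series $g\cdot f$, we have $g'\cdot f_{i_0}=(g\cdot f_{i_0})'$, whence $\val_\a(g')\le \nu_\a\big((g\cdot f_{i_0})'\big)$. Writing $h:=g\cdot f_{i_0}$ (a nonzero series, as $v$ is finite unless $g=0$, a case in which everything is $\infty$ and the claim is trivial), it therefore suffices to prove the following local claim: for a nonzero generalized series $h$ at $\a$ with $\nu_\a(h)=v\neq 0$, one has $\nu_\a(h')\le v+\mu$.

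To prove the claim I would group the terms of $h$ by their exponential factor $\exp(p(z^{-1/s}))$ into blocks $B=\exp(p(z^{-1/s}))\,Q$, where $Q$ is a series in $z^{1/s}$ and $\log z$. Differentiation preserves the exponential factor of a block, so distinct blocks cannot cancel in $h'$ and thus $\nu_\a(h')=\min_B \nu_\a(B')$; it then suffices to bound $\nu_\a(B_0')$ for the block $B_0$ realizing $\nu_\a(B_0)=v$. Using $\nu_\a(z')=\nu_\a(z)+\mu$, differentiating a single term $z^r(\log z)^\ell$ of $Q$ produces $z'\big(r\,z^{r-1}(\log z)^\ell+\ell\,z^{r-1}(\log z)^{\ell-1}\big)$, of valuation $\Re(r)+\mu$ as soon as $r\neq 0$, whereas the product rule applied to the exponential factor yields a summand of valuation $v+\mu-k/s$ with $k=\deg_x(p)$. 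Hence if $k=0$ the dominant term of $Q$ (minimal local exponent $r$ with $\Re(r)=v$, maximal $\log$-power) survives with coefficient proportional to $r$, giving $\nu_\a(B_0')=v+\mu$; if $k\ge 1$ the exponential summand dominates and $\nu_\a(B_0')=v+\mu-k/s<v+\mu$. In either case $\nu_\a(B_0')\le v+\mu$, which proves the claim; this term-level computation is a special case of the one already carried out in the proof of Proposition~\ref{Prop:psi}.

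The only delicate point---and the reason the hypothesis $\val_\a(g)\neq 0$ is indispensable---is the absence of cancellation of the dominant term under differentiation. This is guaranteed precisely because $\Re(r)=v\neq 0$ forces $r\neq 0$, so the factor $r$ produced by differentiating $z^r$ does not vanish; were $v=0$ allowed, the leading power could be $z^0$, whose derivative is $0$, and the conclusion would fail. Once this non-vanishing is secured, the remaining bookkeeping with exponential parts and $\log$-powers is routine.
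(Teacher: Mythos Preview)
Your argument is correct and follows essentially the same route as the paper: choose a series solution $y_i$ realizing $\val_\a(g)$, isolate the dominant term $T=z^r\exp(p(z^{-1/s}))\log(z)^\ell$ of $g\cdot y_i$, differentiate it explicitly, and read off the valuation bound, distinguishing the cases $\deg_x p=0$ and $\deg_x p>0$. Your treatment is slightly more careful than the paper's in that you explicitly rule out cancellation (by grouping terms with the same exponential factor and noting the dominant term survives) and you spell out why $\val_\a(g)\neq 0$ forces $r\neq 0$; the paper simply asserts ``$r\neq 0$'' and that the displayed term of $D\cdot T$ persists in $g'\cdot y_i$, leaving the cancellation check implicit.
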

\begin{proof}
	Let $y_i$ be a generalized series solution in $\sol_\a(L)$ such that $\val_\a(g)=\nu_\a(g\cdot y_i)$. Let $z=x-\a$ (or $z=1/x$ if $\a=\infty$). Let $T=z^r \exp(p(z^{-1/s}))\log(z)^\ell$ with $r\neq 0$, $s,\ell\in\set N$, $p\in \bar C[x]$ be the dominant term of $f\cdot y_i$,
	i.e., among all terms with minimal $r$ the one with the largest exponent~$\ell$. Let $k = \deg_x(p)$ and $c=\lc_x(p)$. Then
	\begin{align}\nonumber
		D \cdot T
		& =
		-r\mu z^{r+\mu}\exp(p(z^{-1/s}))\log(z)^\ell \\\nonumber
		&+\frac{\mu ck}{s}z^{r-\frac{k}{s}+\mu}\exp(p(z^{-1/s}))\log(z)^\ell + \cdots\\\label{eq:DT}
		&-\ell\mu z^{r+\mu}\exp(p(z^{-1/s}))\log(z)^{\ell-1}.
	\end{align}
	Note that $r-\frac{k}{s}+\mu \leq r+\mu$ and $r\neq 0$ (by the assumption $\val_\a(g)\neq 0$). So the valuation of the term $D\cdot T$ in $g'\cdot y_i$ is less than or equal to $r+\mu$, which implies that $\val_\a(g')\leq \val_\a(g)+\mu$.
\end{proof}

Let $W=\{\omega_1, \ldots,\omega_n\}$ be a local integral basis at infinity. Let $\lambda \in \set Z$ and $M\in C(x)^{n\times n}$ be such that $W'=x^\lambda MW$ and $\nu_{\infty}(M) = 0$. Then $\lambda = -\nu_{\infty}(x^\lambda M)= \deg_x(x^\lambda M)$. By repeating the reduction at infinity, we can reduce the degree in $x$ as far as possible and decompose $f\in A$ as
\begin{equation}\label{EQ:hr-remainder-infty}
	f= g '  + h\quad \text{with}\quad h =\sum_{i=1}^n h_i\omega_i
\end{equation}
where $g\in A$, $h_i\in C(x)$ with $\deg_x(h_i) < \max\{0,\lambda\}$ for all $i$ and  the coefficients of $g$ are polynomials.
The following lemma derives an upper bound for the degree of any hypothetical integral of $h$ in $A$.
\begin{lemma}\label{LEM:degree}
	Let $h \in A$ be as in ~\eqref{EQ:hr-remainder-infty}. If $h$ is integrable in $A$, then $h=(\sum_{i=1}^n b_i\omega_i)'$ with $b_i\in C(x)$ and $\deg_x(b_i)\leq\max\{0,\lambda\}$ for all $i\in\{1,\ldots,n\}$.
\end{lemma}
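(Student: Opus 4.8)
The plan is to show that \emph{every} integral of $h$ in $A$ already satisfies the asserted degree bound, so that no modification of the integral is needed. Since $W=(\omega_1,\dots,\omega_n)$ is a $C(x)$-vector space basis of $A$, integrability of $h$ provides an element $b=\sum_{i=1}^n b_i\omega_i\in A$ with $b_i\in C(x)$ and $b'=h$; if $h=0$ we simply take $b=0$, so we may assume $h\neq0$. Writing $\delta:=\max_{i}\deg_x(b_i)$, the goal reduces to the single inequality $\delta\le\max\{0,\lambda\}$. The case $\delta\le 0$ is immediate, so the real work is to control the case $\delta\ge 1$.

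The first key step is to translate the degrees of the coordinates into a statement about the value function at infinity. Recalling that $\nu_\infty(b_i)=-\deg_x(b_i)$, Lemma~\ref{Lem:pole}(2) applied to the local integral basis $W$ at infinity gives $\lfloor\val_\infty(b)\rfloor=\min_i\nu_\infty(b_i)=-\delta$. Hence, when $\delta\ge 1$, we have $\val_\infty(b)<-\delta+1\le 0$, so in particular $\val_\infty(b)\neq0$. This is exactly the hypothesis needed to invoke Lemma~\ref{Lem:val_under_der}, which (with $\mu=1$ at infinity) yields the upper estimate $\val_\infty(h)=\val_\infty(b')\le\val_\infty(b)+1<-\delta+2$.

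The second key step is a matching lower estimate coming from the fact that $h$ is a Hermite remainder. By~\eqref{EQ:hr-remainder-infty} we have $\deg_x(h_i)<\max\{0,\lambda\}$, i.e.\ $\nu_\infty(h_i)\ge 1-\max\{0,\lambda\}$ for every nonzero $h_i$, so Lemma~\ref{Lem:pole}(2) again gives $\val_\infty(h)\ge\lfloor\val_\infty(h)\rfloor=\min_i\nu_\infty(h_i)\ge 1-\max\{0,\lambda\}$. Combining the two estimates forces $1-\max\{0,\lambda\}<-\delta+2$; since $\delta$ and $\lambda$ are integers, this collapses to $\delta\le\max\{0,\lambda\}$, which is the claim.

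The only delicate points, and the places where I expect to argue carefully rather than compute, are the bookkeeping of the floor functions together with the sign convention $\mu=1$ at infinity, and the verification that Lemma~\ref{Lem:val_under_der} is genuinely applicable. The latter is the reason for separating the harmless case $\delta\le 0$: Lemma~\ref{Lem:val_under_der} requires $\val_\infty(b)\neq0$, and it is precisely the identity $\lfloor\val_\infty(b)\rfloor=-\delta\le-1$ that guarantees this in the remaining case. Once these are in place, the two one-line valuation estimates pinch $\delta$ between integer bounds and the result drops out.
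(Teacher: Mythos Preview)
Your proof is correct and follows essentially the same route as the paper: both arguments translate the degree condition into a valuation statement at infinity via Lemma~\ref{Lem:pole}, then use Lemma~\ref{Lem:val_under_der} (with $\mu=1$) to bound $\val_\infty(h)$ from above in terms of $\val_\infty$ of the integral, and compare this with the lower bound on $\val_\infty(h)$ coming from the Hermite-remainder condition $\deg_x(h_i)<\max\{0,\lambda\}$. The only cosmetic difference is that the paper phrases the argument as a proof by contradiction (assuming $\tau:=\min_i\nu_\infty(b_i)<\min\{0,-\lambda\}$), whereas you argue directly after isolating the trivial case $\delta\le 0$.
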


\begin{proof}
	%
	Suppose $h$ is integrable in $A$. Then there exists $H=\sum_{i=1}^n b_i\omega_i\in A$ with $b_i\in C(x)$ such that $h=H'$. By~\eqref{EQ:hr-remainder-infty},  we know the coefficients of $h$ satisfy $\deg_x(h_i) < \max\{0,\lambda\}$, which implies $\nu_\infty(h_i)> \min\{0,-\lambda\}$. Since $\{\omega_1,\ldots,\omega_n\}$ is a local integral basis at infinity, it follows from Lemma~\ref{Lem:pole} that $\val_\infty(h)> \min\{0,-\lambda\}$.
	
	We want to show that $\deg_x(b_i)\leq \max\{0,\lambda\}$ for all $i$, which means $\nu_\infty(b_i) \geq \min\{0,-\lambda\}$ for all $i$. Suppose to the contrary that $\tau:=\min_{i=1}^n\{\nu_\a(b_i)\} < \min\{0,-\lambda\}$. Since $\{\omega_1,\ldots, \omega_n\}$ is a local integral basis at infinity, by Lemma~\ref{Lem:pole} we get $\val_\infty(H) < \tau +1 \leq \min\{0,-\lambda\}$. So $H$ has a pole at infinity. By Lemma~\ref{Lem:val_under_der} we have
	\[\val_\infty(h) \leq \val_\infty(H) + 1\leq \min\{0,-\lambda\}.\]
	This leads to a contradiction. So $\deg_x(b_i)\leq \max\{0,\lambda\}$ for all $i$.
\end{proof}

\subsection{The Global Case}\label{sec:global}
To avoid algebraic extensions of the base field, Hermite reduction can be performed simultaneously at all roots of some squarefree polynomial. 

Let  $W=\{\omega_1,\ldots,\omega_n\}$ be an integral basis of $A=C(x)[D]/\<L>$. Let $e\in C[x]$ and $M=(m_{i,j})_{i,j=1}^n\in C[x]^{n\times n}$ be such that $eW'=MW$ and $\gcd(e, m_{1,1},m_{1,2},\ldots, m_{n,n})=1$. Let $v$ be a nontrivial squarefree polynomial and $\lambda\in \set N$ be an integer such that $v^\lambda \mid e$ and $\gcd(\frac{e}{v^\lambda}, v)=1$. Let $f=\frac1{uv^d}\sum_{i=1}^n a_i\omega_i\in A$ with
$u,a_1,\dots,a_n\in C[x]$ such that $d> 1$ and $\gcd(u,v)=\gcd(v,v')=\gcd(v,a_1,\dots,a_n)=1$. Upon differentiating, the $\omega_i$'s may introduce denominators, namely the factors of $e$. Without loss of generality, we assume $e \mid uv^d$. Suppose $d>\max\{1,\lambda\}$. In order to execute one step of the Hermite reduction to reduce the multiplicity $d$, we seek
$b_1,\dots,b_n,c_1,\dots,c_n$ in $C[x]$ such that
\begin{equation}\label{EQ:hr-goal-v}
	\frac{\sum_{i=1}^na_i\omega_i}{uv^d} =\left(\frac{\sum_{i=1}^nb_i\omega_i}{v^{d-1}}\right)^\prime + \frac{\sum_{i=1}^nc_i\omega_i}{uv^{d-1}}.
\end{equation}
If $\lambda=0$, then $\gcd(e, v)= 1$. Multiplying~\eqref{EQ:hr-goal-v} by $uv^{d}$ and reducing this equation modulo $v$ yield
\begin{equation}\label{EQ:hr-case1}
	\sum_{i=1}^n a_i \omega_i \equiv -(d-1)\sum_{i=1}^n b_iuv^\prime \omega_i\mod v.
\end{equation}
Since $\gcd(u,v)=\gcd(v,v^\prime) =1$, we get $b_i \equiv -(d-1)^{-1}(uv^\prime)^{-1} a_i\mod v$ is the unique solution of~\eqref{EQ:hr-case1} in $C[x]/\<v>$.
If $\lambda \geq 1$, multiplying~\eqref{EQ:hr-goal-v} by $uv^{d+\lambda-1}$ and reducing this equation modulo $v^{\lambda-1}$ yield
%
\begin{equation}\label{EQ:hr-v}
	\sum_{i=1}^n v^{\lambda-1} a_i\omega_i \equiv \sum_{i=1}^n b_iuv^{d+\lambda-1}\left(\frac{\omega_i}{v^{d-1}}\right)^\prime \mod v^\lambda.
\end{equation}
One can adapt the argument in the local case to show that~\eqref{EQ:hr-v} always has a unqiue solution~$(b_1, \ldots, b_n)$ in $\left(C[x]/\<v^\lambda>\right)^n$. Let $\psi_i := v^{d+\lambda-1}(v^{1-d}\omega_i)'$ for $i=1,\ldots, n$. As an analog of Proposition~\ref{Prop:psi}, for each root $\alpha\in  \bar C$ of $v$, we have
\begin{equation}\label{EQ:psi_bound}
	\sum_{i=1}^n  \bar C(x)_\a \psi_i \subseteq \OO_\alpha \subseteq \frac{1}{v^{\lambda-1}}\sum_{i=1}^n  \bar C(x)_\a \psi_i.
\end{equation}
Thus the linear system
$\sum_{i=1}^n v^{\lambda-1}a_i \omega_i = \sum_{i=1}^n u b_i\psi_i$
has a unique solution $(b_1, \ldots, b_n)$ in $\left(C[x]/\<v^{\lambda}>\right)^n$.
Equating the coefficients of the $\omega_i$'s on both sides of~\eqref{EQ:hr-v}, the vector $b=(b_1,\ldots, b_n)$ can be found by solving the following linear system of congruence equations:
\begin{equation}\label{EQ:hr-matrix-v}
	(v^{\lambda-1} a_1,\ldots, v^{\lambda -1 }a_n) \equiv b(uv^{\lambda}e^{-1}M -(d-1)uv^{\lambda-1}v'I_n) \mod v^\lambda,
\end{equation}
where $I_n$ is the identity matrix in $C[x]^{n\times n}$.

By repeated application of the above Hermite reduction step, we can reduce the pole orders at finite places as far as possible, i.e.,  we can decompose any $f\in A$ as
\begin{equation}\label{EQ:hr-remainder}
	f =\tilde g' + h\quad\text{with}\quad h=\sum_{i=1}^n\frac{h_i\omega_i}{de},
\end{equation}
where $\tilde g\in A$, $h_1,\ldots,h_n, d\in C[x]$, $d$ is squarefree, $\gcd(d,e) = 1$ and the coefficients of $\tilde g$ are proper rational functions.

\begin{lemma}\label{LEM:d}
	Let $h\in A$ be as in ~\eqref{EQ:hr-remainder}. If $h$ is integrable in $A$, then $h=\left(\frac{\sum_{i=1}^nq_i\omega_i}{u} \right)'$, where $q_1,\ldots,q_n,u\in C[x]$ and $u \mid \gcd(e,e')$. Furthermore, we have $d\in C$.
\end{lemma}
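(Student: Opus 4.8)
The plan is to work directly from the definition of integrability. If $h$ is integrable in $A$, fix $H\in A$ with $h=H'$ and write it in lowest terms in the given basis as $H=\frac1u\sum_{i=1}^n p_i\omega_i$ with $p_1,\dots,p_n,u\in C[x]$ and $\gcd(p_1,\dots,p_n,u)=1$. Since a global integral basis is in particular a local integral basis at every finite place (localizing the free $C[x]$-module of globally integral elements at a finite prime yields the local integral module), Lemma~\ref{Lem:pole} lets me read the pole order of both $H$ and of $h$ off their coordinates, while Lemma~\ref{Lem:val_under_der} supplies the bridge $\val_\beta(H')\le\val_\beta(H)-1$ whenever $\val_\beta(H)\neq0$. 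The whole argument is then a place-by-place comparison at each finite $\beta$: the coordinates force an upper bound on the pole of $h$, the coordinates of $H$ together with Lemma~\ref{Lem:val_under_der} force a lower bound, and the two together pin down $u$.

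First I would show that every root of $u$ is a root of $e$. Suppose $\beta\mid u$ with multiplicity $b\ge1$ but $\beta\nmid e$. Because $\gcd(p_1,\dots,p_n,u)=1$, some $p_i$ is nonzero at $\beta$, so Lemma~\ref{Lem:pole}(2) gives $\lfloor\val_\beta(H)\rfloor=-b$; in particular $\val_\beta(H)<-b+1\le0$, hence $\val_\beta(H)\neq0$ and Lemma~\ref{Lem:val_under_der} yields $\val_\beta(h)\le\val_\beta(H)-1<-b\le-1$. On the other hand, reading $h$ off \eqref{EQ:hr-remainder}: since $\beta\nmid e$ and $d$ is squarefree, $\nu_\beta(de)\le1$, so Lemma~\ref{Lem:pole}(2) gives $\lfloor\val_\beta(h)\rfloor\ge-1$, i.e.\ $\val_\beta(h)\ge-1$. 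This contradicts $\val_\beta(h)<-1$, so no such $\beta$ exists.

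Next I would bound the multiplicity of each root of $u$. Let $\beta$ be a root of $e$ of multiplicity $a$ and suppose $\beta$ occurs in $u$ with multiplicity $b\ge1$. As before $\lfloor\val_\beta(H)\rfloor=-b$, so $\val_\beta(H)\neq0$ and $\val_\beta(h)\le\val_\beta(H)-1<-b$. Since $\gcd(d,e)=1$ forces $\nu_\beta(de)=a$, Lemma~\ref{Lem:pole}(2) gives $\val_\beta(h)\ge-a$. Hence $-a\le\val_\beta(h)<-b$, so $b\le a-1$. Combining the two steps, every root of $u$ is a root of $e$ with strictly smaller multiplicity than in $e$, which is exactly $u\mid\gcd(e,e')$. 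Setting $q_i:=p_i$ then gives the asserted form $h=\bigl(u^{-1}\sum_{i=1}^n q_i\omega_i\bigr)'$.

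For the final assertion I would exploit that $u\mid\gcd(e,e')\mid e$ together with $\gcd(d,e)=1$: every root $\beta$ of $d$ satisfies $u(\beta)\neq0$ and $e(\beta)\neq0$, so $W'=e^{-1}MW$ is integral at $\beta$ and $H=u^{-1}\sum_i q_i\omega_i$ is integral at $\beta$; therefore $h=H'$ is integral at $\beta$, i.e.\ $\val_\beta(h)\ge0$. But, taking the remainder in reduced form so that $\gcd(d,h_1,\dots,h_n)=1$ (any common factor can be cancelled between $d$ and the $h_i$ without changing $h$), the facts that $d$ is squarefree and $\gcd(d,e)=1$ give $\nu_\beta(de)=1$ with some $h_i(\beta)\neq0$, whence $\lfloor\val_\beta(h)\rfloor=-1$ and $h$ has a genuine pole at $\beta$, a contradiction. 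Thus $d$ has no roots, i.e.\ $d\in C$. I expect the main difficulty to lie in the multiplicity step: keeping track of the strict versus non-strict inequalities introduced by $\lfloor\cdot\rfloor$ and by Lemma~\ref{Lem:val_under_der}, and, in the last paragraph, in justifying cleanly that at a root of $d$---which is an ordinary point for the basis precisely because $\beta\nmid e$---differentiation cannot create a pole.
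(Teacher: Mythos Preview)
Your proof is correct and follows essentially the same approach as the paper: a place-by-place analysis at finite $\beta$, using Lemma~\ref{Lem:pole} to read off $\lfloor\val_\beta\rfloor$ from the coordinates and Lemma~\ref{Lem:val_under_der} to compare $\val_\beta(H)$ with $\val_\beta(h)$, splitting into the cases $\beta\nmid e$ and $\beta\mid e$. The only differences are organizational---the paper extracts $d\in C$ already from the case $\beta\nmid e$ and then bounds multiplicities at roots of~$e$, whereas you first pin down $u\mid\gcd(e,e')$ and deduce $d\in C$ afterward---and your explicit reduction to $\gcd(d,h_1,\dots,h_n)=1$ makes precise a point the paper leaves implicit in its ``hence $d$ is a constant''.
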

\begin{proof}
	Suppose $h$ is integrable in $A$. Then there exists $H=\sum_{i=1}^n b_i\omega_i\in A$ with $b_i\in C(x)$ such that $h=H'$.
	
	If $\a\in \bar C$ is not a root of $e$, then $b_i$ has no pole at $\a$ for all $i$. Otherwise, suppose $b_i$ has a pole at $\a$ for some $i\in\{1,\ldots,n\}$. Then $H$ has a pole at $\a$, because $\{\omega_1,\ldots,\omega_n\}$ is an integral basis. But then by Lemma~\ref{Lem:val_under_der}, $h$ would have a pole of order greater than $1$, which is impossible because $\gcd(d,e)=1$ and $d$ is squarefree. Therefore $b_i$ is integral at $\a$ for all $i$ and hence $d$ is a constant.
	
	If $\a\in \bar C$ is a root of $e$, then $b_i$ has a pole at $\a$ of order at most $\nu_\a(e)-1$. Otherwise, suppose $\tau:=\min_{i=1}^n\{\nu_\a(b_i)\}\leq -\nu_\a(e)\leq -1$. By Lemma~\ref{Lem:pole}, $\val_\a(H)<\tau+1\leq 0$. Then $H$ has a pole at $\a$. So by Lemma~\ref{Lem:val_under_der}, we have $\val_\a(h)\leq \val_\a(H)-1$. Thus $\val_\a(h) < -\nu_\a(e)$. But from $h=\sum_{i=1}^n\frac{h_i\omega_i}{de}$, we see $\val_\a(h) \geq -\nu_\a(e)$ because $\gcd(d,e)=1$. This leads to a contradiction.
	
	Note that $\nu_\a(\gcd(e,e')) = \nu_\a(e)-1$ if $\a$ is a root of $e$ and $\nu_\a(\gcd(e,e'))=0$ if $\a$ is not a root of $e$. So $u$ is a common multiple of the denominators of the $b_i$'s.
\end{proof}
\begin{example}
	Let $L=x^3 D^2 + (3x^2+2)D\in \set C(x)[D]$ be the same differential operator as in Examples \ref{Ex:non-fuch at 0} and~\ref{Ex:psi}. Since $e=x^3$, we have $u =\gcd(e,e')=x^2$. We could find a Hermite remainder such that the denominator of its integral is $u$, for example,
	\[f=\frac{2\omega_1 + \omega_2}{2x^3} = \left(\frac{-2\omega_1 -\omega_2}{4x^2}\right)'.\]
\end{example}
\section{Additive Decompositions}\label{sec:add}
Now we combine the Hermite reduction at finite places and at infinity to decompose a D-finite function $f$ as $f=g' + h$ such that $f$ is integrable if and only if the remainder $h$ is zero. To achieve this goal, first we confine all remainders into a finite-dimensional vector space and then find all possible integrable functions in this vector space. This procedure is similar to the hyperexponential case~\cite{bostan13a}, the algebraic case~\cite{chen16a}, the Fuchsian case~\cite{chen17a} and the D-finite case~\cite{vanderHoeven21,bostan18a}. It provides an alternative method for solving the accurate integration problem for D-finite functions~\cite{AbramovVanHoeij1999}.

Since there may not exist a basis of $A=C(x)[D]/\<L>$ that is a local integral basis at all $\a\in  \bar C\cup \{\infty\}$, we need two bases to perform Hermite reduction at finite places and at infinity, respectively. Let $W=(\omega_1,\ldots,\omega_n)\in A^n$ be an integral basis of~$A$ that is normal at infinity. There exists $T = \diag\bigl(x^{\tau_1}, \ldots, x^{\tau_n}\bigr) \in C(x)^{n\times n}$ with $\tau_i\in \set Z$ such that $V := TW$ is a local integral basis at infinity. Let $e, a\in C[x]$ and $M, B\in C[x]^{n \times n} $ be such that $eW' = MW$ and $aV'=BV$. Since the derivative of $V$ is
$V' = (TW)' = \biggl(T' + \frac{1}{e}TM\biggr)T^{-1}V,$
we may assume that $a=x^\lambda e$ for some $\lambda\in \set N$.
For any integers $\mu, \delta\in \set Z$ with $\mu \leq \delta$, define a subspace of Laurent polynomials in $C[x,x^{-1}]$ as follows:
\[C[x]_{\mu,\delta}:= \left\{\left.\sum_{i=\mu}^{\delta} a_i x^i\,\right|\,a_i\in C\right\}.\]
\begin{theorem}\label{Thm:add decomp}
	Let $W, V\in A^n$ be as described above. Then any element $f\in A$ can be decomposed into
	\begin{equation}\label{EQ:add}
		f = g' + \frac{1}{d} RW + \frac{1}{x^\lambda e} QV,
	\end{equation}
	where $g\in A$, $d\in C[x]$ is squarefree and $\gcd(d, e)=1$, $R\in C[x]^n$, $Q\in C[x]_{\mu,\delta}^n$ with $\deg_x(R) < \deg_x(d)$, $\mu = \min\{-\tau_1, \ldots, -\tau_n, 0\}$ and $\delta = \max\{\lambda + \deg_x(e),\deg_x(B)\}-1$. Moreover, $f$ is integrable in $A$ if and only if $R=0$ and
	\[\frac{1}{x^\lambda e}QV\in U' \quad \text{with} \quad U =\left\{\left.\frac{1}{u}cV \,\right | \,c\in C[x]_{\mu', \delta'}^n\right\},\]
	where $u =\gcd(e,e')$, $\mu'=\min\{-\tau_1,\ldots, -\tau_n,\nu_0(u)\}$ and \[\delta'=\max\{\deg_x(u), \deg_x(B) - \lambda - \deg_x(e) +\deg_x(u)\}.\]
\end{theorem}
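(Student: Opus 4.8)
The plan is to establish the decomposition~\eqref{EQ:add} by composing the two Hermite reductions of Section~\ref{sec:HR}, and then to prove the integrability criterion. For the decomposition, I would first run the global Hermite reduction at finite places (Section~\ref{sec:global}) with the integral basis $W$; by~\eqref{EQ:hr-remainder} this writes $f = g_1' + \sum_{i=1}^n\frac{h_i}{de}\omega_i$ with $d$ squarefree and $\gcd(d,e)=1$. Splitting each coefficient $\frac{h_i}{de}$ by partial fractions along the coprime moduli $d$ and $e$ isolates the simple poles at the roots of $d$, which combine into $\frac1d RW$ with $\deg_x(R)<\deg_x(d)$, and leaves a part whose only finite poles sit at the roots of $e$. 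Rewriting that part in the infinity basis through $\omega_i=x^{-\tau_i}v_i$ and running the Hermite reduction at infinity (the local reduction of Section~\ref{sec:local} at $\alpha=\infty$, with remainder~\eqref{EQ:hr-remainder-infty}) against $V$ lowers the degree in $x$; collecting the outcome over the common denominator $x^\lambda e$ yields $\frac{1}{x^\lambda e}QV$. The bound $\delta=\max\{\lambda+\deg_x(e),\deg_x(B)\}-1$ is exactly the infinity-remainder degree bound $\max\{0,\deg_x(B)-\lambda-\deg_x(e)\}$ read off through this denominator, while $\mu=\min\{-\tau_1,\dots,-\tau_n,0\}$ accounts for the powers of $x$ that the change of basis $V=TW$ contributes at the origin.

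The direction ``$\Leftarrow$'' of the criterion is immediate: if $R=0$ and $\frac{1}{x^\lambda e}QV=(\frac1u cV)'$ for some $c\in C[x]_{\mu',\delta'}^n$, then $f=(g+\frac1u cV)'$ is a derivative. For ``$\Rightarrow$'', suppose $f$ is integrable, so $h:=\frac1d RW+\frac{1}{x^\lambda e}QV$ is integrable, say $h=H'$ with $H=\sum_{i=1}^n b_i\omega_i\in A$. At a root $\alpha$ of $d$ the coprimality $\gcd(d,e)=1$ guarantees that the only pole of $h$ comes from $\frac1d RW$ and is simple; arguing as in Lemma~\ref{LEM:d}, if $H$ had a pole at $\alpha$ then Lemma~\ref{Lem:val_under_der} would force $\val_\alpha(h)<-1$, and if $H$ were integral at $\alpha$ then so would be $h$ by Lemma~\ref{Lem:pole}, either way contradicting a genuine simple pole. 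Hence $d$ is constant and $R=0$ by the bound $\deg_x(R)<\deg_x(d)$.

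It then remains to locate the antiderivative of the surviving term $\frac{1}{x^\lambda e}QV=H'$ inside $U$. The finite-place estimate of Lemma~\ref{LEM:d} shows that, expressed over $W$, the coefficients of $H$ have denominators dividing $u=\gcd(e,e')$, while the degree estimate of Lemma~\ref{LEM:degree}, applied to the local integral basis $V$ at infinity, bounds the $x$-degrees of the coefficients of $H$ by $\max\{0,\deg_x(B)-\lambda-\deg_x(e)\}$. Writing $H=\frac1u cV$ and pushing both estimates through the diagonal change of basis $T=\diag(x^{\tau_1},\dots,x^{\tau_n})$ should turn them into precisely $c\in C[x]_{\mu',\delta'}^n$, i.e.\ $H\in U$. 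The main obstacle is exactly this last reconciliation: the coefficients of $H$ in the two bases differ by the factors $x^{\tau_i}$, so a pole at the origin trades against the degree at infinity, and one must verify that the finite bound $u$ (from $W$ at the roots of $e$) and the infinity degree bound (from $V$) combine into the stated exponents $\mu'=\min\{-\tau_1,\dots,-\tau_n,\nu_0(u)\}$ and $\delta'=\max\{\deg_x(u),\deg_x(B)-\lambda-\deg_x(e)+\deg_x(u)\}$, so that $U$ is a finite-dimensional space capturing every candidate antiderivative.
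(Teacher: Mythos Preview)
Your plan for the decomposition~\eqref{EQ:add} and for the easy direction of the criterion is correct and coincides with the paper's argument. The gap is in the hard direction, precisely where you flag ``the main obstacle''. You propose to invoke Lemma~\ref{LEM:d} on $\frac{1}{x^\lambda e}QV$ to conclude that the antiderivative $H$, written in the basis $W$, has coefficients whose denominators divide $u$. But Lemma~\ref{LEM:d} requires the integrand to be of the form~\eqref{EQ:hr-remainder}, and $\frac{1}{x^\lambda e}QV$ is generally not: rewritten over $W$ its $i$-th coefficient is $\frac{q_i x^{\tau_i}}{x^\lambda e}$, and at $\alpha=0$ the factors $x^{\tau_i}$ and $x^{-\lambda}$ can push the pole order beyond $\nu_0(e)$. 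In fact the conclusion you draw may simply be false: since $\frac{1}{x^\lambda e}QV=\frac{1}{e}SW-(S_1V)'$ with $S_1\in C[x]^n$, the coefficients of $H$ in $W$ equal $\frac{b_i}{u}-(S_1)_i x^{\tau_i}$, and the second term need not have denominator dividing $u$ at~$0$ whenever $\tau_i<-\nu_0(u)$. The same issue undermines your ad hoc argument for $R=0$ if $0$ happens to be a root of $d$, since then $\frac{1}{x^\lambda e}QV$ can contribute a pole at $0$ in addition to that of $\frac1dRW$.

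The paper sidesteps this by never applying Lemma~\ref{LEM:d} to $\frac{1}{x^\lambda e}QV$. It applies Lemma~\ref{LEM:d} to the pre-infinity remainder $\frac{1}{e}SW$, which \emph{is} literally of the form~\eqref{EQ:hr-remainder} (with $d=1$), yielding an antiderivative $\frac{1}{u}bW$ with $b\in C[x]^n$. Then, using the identity $\frac{1}{x^\lambda e}QV=\frac{1}{e}SW-(S_1V)'$ produced by the Hermite reduction at infinity, the antiderivative of $\frac{1}{x^\lambda e}QV$ is explicitly $\frac{1}{u}cV$ with $c=bT^{-1}-uS_1$. From this formula one reads off $\nu_0(c)\geq\min\{\nu_0(bT^{-1}),\nu_0(uS_1)\}\geq\mu'$, and Lemma~\ref{LEM:degree} gives $\deg_x(c)\leq\delta'$; no further reconciliation is needed. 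The same device---citing Lemma~\ref{LEM:d} for the original Hermite remainder rather than for $\frac1dRW+\frac{1}{x^\lambda e}QV$---also yields $d\in C$ and hence $R=0$ without special-casing the origin.
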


\begin{proof}
	Let~$h\in A$ be a Hermite remainder as in~\eqref{EQ:hr-remainder}. By the extended Euclidean algorithm, we compute $r_i, s_i\in C[x]$ such that
	$h_i = r_i e + s_i d$ and $\deg_x(r_i) < \deg_x(d)$. Then~$h$
	decomposes as
	\begin{equation}\label{EQ:h-remainder}
		h=\sum_{i=1}^n \frac{h_i}{de}\omega_i =
		\sum_{i=1}^n \frac{r_i}{d}\omega_i +
		\sum_{i=1}^n \frac{s_i}{e}\omega_i.
	\end{equation}
	Writing $h$ in vector form, by~\eqref{EQ:hr-remainder} we decompose $f\in A$ as
	\begin{equation}\label{EQ:add1}
		f = \tilde{g}' + \frac{1}{d} RW + \frac{1}{e} SW,
	\end{equation}
	where $\tilde g\in A$, $R = (r_1, \ldots, r_n)\in C[x]^n$, $S = (s_1, \ldots, s_n)\in C[x]^n$. In the next step, we shall reduce the degree of $S$ and confine $S$ to a finite-dimensional vector space over $C$ that is independent of $f$.
	We rewrite the last summand in~\eqref{EQ:add1} with respect to the new basis~$V$:
	\begin{equation}\label{EQ:W-V}
		\frac{1}{e} SW = \frac{1}{x^\lambda e} \tilde{S}V,
	\end{equation}
	where $\tilde{S} = x^\lambda S T^{-1} \in x^\mu C[x]^n$ with $\mu =\min\{-\tau_1, \ldots, -\tau_n,0\}$. Since $V$ is a local integral basis at infinity, using Hermite reduction at infinity in Section~\ref{sec:local}, we obtain from~\eqref{EQ:hr-remainder-infty} that
	\begin{equation}\label{EQ:S1S2}
		\frac{1}{e} SW = (S_1 V)' + \frac{1}{x^\lambda e} S_2 V,
	\end{equation}
	where $S_1\in C[x]^n$ and $S_2\in x^\mu C[x]^n$ satisfies
	\[\deg_x\left(\tfrac{S_2}{x^\lambda e}\right) \leq \max\left\{0, \deg_x\left(\tfrac{B}{x^\lambda e}\right)\right\}-1.\] This implies that $\deg_x(S_2)\leq \max\{\lambda + \deg_x(e),\deg_x(B)\}-1=\delta$. Thus $S_2\in C[x]_{\mu,\delta}^n$ and we finally obtain the decomposition~\eqref{EQ:add} by setting	$g = \tilde{g} + S_1 V$ and $Q = S_2$.	
	
	For the last assertion, assume that $f$ is integrable (the other direction of
	the equivalence holds trivially). Then Lemma~\ref{LEM:d} implies that $d\in C$,
	and therefore $R$ must be zero because $\deg_x(R) < \deg_x(d)$. Hence the last
	summand in~\eqref{EQ:add} and the left hand side of~\eqref{EQ:S1S2} are also integrable. We want to find its integral by estimating the valuation of this integral at all points in $ C\cup \{\infty\}$. Since $W$ is a global integral basis, using Lemma~\ref{LEM:d} again, we know
	\[
	\frac{1}{e} SW = \left(\frac{1}{u}bW\right)',
	\]
	where $b\in C[x]^n$ and $u =\gcd(e,e')$. Then
	\[\frac{1}{x^\lambda e} QV= \frac{1}{e}SW - (S_1 V)' = \left(\left(\frac{bT^{-1}}{u} - S_1\right)V\right)'=\left(\frac{1}{u}cV\right)',\]
	where $c = bT^{-1} - uS_1\in C[x,x^{-1}]^n$. Now we only need to estimate the valuation of $c$ at the remaining two points $0$ and $\infty$. By the expression of $c$, we get
	\[\nu_0(c) \geq\min\{\nu_0(bT^{-1}), \nu_0(u S_1)\} \geq \min\{-\tau_1, \ldots, -\tau_n, \nu_0(u)\}=\mu'.\]
	On the other hand, since $V$ is a local integral basis at infinity, it follows from Lemma~\ref{LEM:degree} that $\deg_x\left(\frac{c}{u}\right) \leq \max\left\{0, \deg_x\left(\frac{B}{x^\lambda e}\right)\right\}$. Therefore
	\[\deg_x(c)\leq \max\{\deg_x(u), \deg_x(B) - \lambda - \deg_x(e) +\deg_x(u)\}=\delta'.\]
	Finally we have $c\in C[x]_{\mu',\delta'}^n$.
\end{proof}

The remaining step is to reduce all integrable D-finite functions to zero. Note that in Theorem~\ref{Thm:add decomp}, $U$ is a $C$-vector space of dimension $n(\delta'-\mu'+1)$ with a basis
\begin{equation}
	\left\{\left.\frac{x^j v_i}{u} \,\right|\, i=1,\ldots, n; j=\mu', \ldots, \delta'\right\},
\end{equation}
where $V=(v_1, \ldots, v_n)$.
Let $K=\left\{\left.\frac{1}{x^\lambda e} b V \, \right|\, b\in C[x]_{\mu,\delta}^n\right\}$. Differentiating all elements in the basis of $U$ and using Gauss elimination, we can find a basis of $U'$ and decompose $K= (U'\cap K) \oplus N_V$ as a direct sum of two subspaces, where $N_V$ is a complement of $U'\cap K$ in $K$. This means $f$ in~\eqref{EQ:add} can be further decomposed as
\begin{equation}\label{EQ:add2}
	f = \tilde g' + \frac{1}{d} RW + \frac{1}{x^\lambda e} Q_2V,
\end{equation}
where $\tilde g= g+ g_1$ with $g_1' \in U'\cap K$ and $Q_2\in C[x]_{\mu,\delta}^n$ such that $f$ is integrable in $A$ if and only if $R=Q_2=0$. This decomposition~\eqref{EQ:add} is called an \emph{additive decomposition} of~$f$ with respect to~$x$. When $L$ is a Fuchsian operator, the additive decomposition of $f$ was obtained in~\cite[Theorem 23]{chen17a}.

In practice, we may choose a fixed complement of $K\cap U'$ in $K$. To do this, we define a {\em term over position order} on the set
\[\left\{\left.x^jv_i \,\right| \, i=1,\ldots, n; j\in \set Z\right\}\]
such that $x^{j_1}v_{i_1}>x^{j_2}v_{i_2}$ if and only if $j_1 >j_2$ or $j_1=j_2$ and $i_1<i_2$. Let $\lt(\cdot)$ denote the leading term of an element in $A=C(x)[D]/\<L>$. For example, if $p=3x^2(v_1+v_2)+10xv_1\in A$, then $\lt(p)=x^2v_1$. Then a {\em standard complement} of $K\cap U'$ in $K$ is a $C$-vector subspace of $K$ generated by
\[\{ h \in K \mid h \neq \lt(g) \text{ for all } g\in K \cap U' \}.\]
From now on, let $N_V$ denote the standard complement of $K\cap U'$ in $K$.  This definition of $N_V$ is essentially the same as in~\cite{chen17a}, because there is a bijection from a D-finite function to its coefficients of the $v_i$'s.

Note that $Q_2$ belongs to a $C$-vector space $C[x]_{\mu,\delta}^n$ of dimension
\begin{equation}\label{EQ: bound_Nv}	n(\delta-\mu+1)=\max\{\lambda+\deg_x(e),\deg_x(B)\}+\max\{\tau,0\},
\end{equation}
where $\tau=\max\{\tau_1,\ldots, \tau_n\}$. If $L$ is Fuchsian, by~\cite[Lemma 4]{chen17a}, we know $\deg_x(B)<\lambda + \deg_x(e)$. So $Q_2$ belongs to a $C$-vector space of dimension at most $n(\max\{\tau,0\}+\lambda+\deg_x(e))$.
This is a refinement of~\cite[Proposition 22]{chen17a}.

\begin{example}\label{Ex:add-nonfuch-infty}
	Let $L=x D^2 - (3x^3+2)D\in \set C(x)[D]$ be the same operator as in Example~\ref{Ex:non-fuch at infty}. Then $W=(\omega_1,\omega_2)=(1,x^{-2}D) = V$. So $e = 1$, $\lambda =0$ and $M=B=\binom{0\hspace{0.3cm}x^2}{0\,~\,3x^2}$. After performing Hermite reduction at infinity in Example~\ref{Ex:hr-infty}, we get
	\begin{equation}\label{EQ:add-nonfuch}
		f=\left(x^4\omega_1+\left(\tfrac{4}{9}x-\tfrac{1}{3}x^4\right)\omega_2\right)'+\left(x -\tfrac{4}{9}\right)\omega_2.
	\end{equation}
	Then $\mu =0$, $\delta=1$, $u=1$, $\mu'= 0$, $\delta'=2$. A basis of $U$ is
	\[\{\omega_1, \,\omega_2, \, x\omega_1, \, x\omega_2,\, x^2\omega_1,\, x^2\omega_2\},\]
	and hence $U'$ is generated by
	\[\{x^2\omega_2, \, 3x^2\omega_2, \, \omega_1+x^3\omega_2, \, (1+3x^3)\omega_2, \, 2x\omega_1+x^4\omega_2,\, (2x+3x^4)\omega_2\}.\]
	So a basis of $K \cap U'$ is
	\[\left\{3\omega_1-\omega_2,\, 6x\omega_1-2x\omega_2\right\}.\]
	The leading terms of all elements in $K\cap U'$ are $\omega_1$ or $x\omega_1$.
	Since $\lt((x-\frac{4}{9})\omega_2) = x\omega_2$ is different from all these terms, by Theorem~\ref{Thm:add decomp} we know $f$ is not integrable in $A=\set C(x)[D]/\<L>$ and~\eqref{EQ:add-nonfuch} is an additive decomposition of $f$ with respect to $x$.
\end{example}

\begin{example}\label{Ex:add-fuch-infty}
	Let $L=x^3 D^2 + (3x^2+2)D\in \set C(x)[D]$ be the same operator as in Example~\ref{Ex:non-fuch at 0}. Then $W=(\omega_1,\omega_2)=(1,x^3D) = V$. So $e = x^3$, $\lambda =0$ and $M=B=\binom{0\hspace{0.3cm}1}{0\,-2}$. Combining Hermite reduction at all finite places in Example~\ref{Ex:hr-finite} and Hermite reduction at infinity, we get
	\[f = \left(\left(\tfrac{2}{3x}-x\right)\omega_1+\left(\tfrac{4}{3x}-3x\right)\omega_2\right)'-\tfrac{4}{3x^2}\omega_1-\tfrac{2}{3x^2}\omega_2.\]
	Then $\mu =0$, $\delta=2$, $u=x^2$, $\mu'= 0$, $\delta'=2$. A basis of~$U$ is
	\[\left\{\tfrac{\omega_1}{x^2},\,\tfrac{\omega_2}{x^2},\,\tfrac{\omega_1}{x},\,\tfrac{\omega_2}{x},\,\omega_1, \,\omega_2\right\}.\]
	A basis of $K \cap U'$ is
	\[\left\{-\tfrac{2}{x^2}\omega_1-\tfrac{1}{x^2}\omega_2,\, \tfrac{1}{x^3}\omega_2,\, -\tfrac{2}{x^3}\omega_2\right\}.\]
	Therefore $f$ is integrable:
	\[f = \left(\left(\tfrac{2}{3x}-x+\tfrac{4}{3x}\right)\omega_1+\left(\tfrac{4}{3x}-3x+\tfrac{2}{3x}\right)\omega_2\right)'.\]
\end{example}

\section{Applications}\label{sec:telescoper}
Let $K(x)[\partial_t, D_x]$ with $K=C(t)$ be an Ore algebra, in which $D_x$ is the differentiation with respect to $x$ and $\partial_t$ is either the differentiation with respect to $t$ or the shift $t \mapsto t+1$. Let $I$ be a left ideal of $K(x)[\partial_t, D_x]$ generated by $L$ and $\partial_t - u_t$ with $L, u_t\in K(x)[D_x]$. The quotient $A=K(x)[\partial_t, D_x]/I$ is a finite-dimensional vector space over $K(x)$ and a basis is given by $\{1, D_x, \ldots, D_x^{n-1}\}$, where $n$ is the order of $L$. Every element $f$ in $A$ can be uniquely written as $P_f+I$ for some $P_f \in K(x)[D_x]$. The map sending $f$ to $P_f+\<L>$ gives an isomorphism from $A$ to $K(x)[D_x]/\<L>$ as a $K(x)[D_x]$-module. Using this isomorphism, for any $f\in A$, we can apply our additive decomposition to test whether $f$ is integrable (in $x$). If $f\in A$ is not integrable, one can ask to find a nonzero operator $T \in C(t)[\partial_t]$ (free of $x$) such that $T(f)$ is integrable. Such an operator $T$ if it exists is called a \emph{telescoper} for $f$. Applying the additive decomposition with respect to $x$ in Section~\ref{sec:add} to $\partial_t^i\cdot f\in A$ yields that
\[\partial_t^i\cdot f = g_i'+h_i\]
where $g_i,h_i\in A$, and $\partial_t^i\cdot f$ is integrable in $A$ if and only if $h_i=0$. If there exist $c_0, c_1,\ldots, c_{r}\in K$ such that $\sum_{i=0}^r c_i h_i=0$, then $T=\sum_{i=0}^rc_i\partial_t^i$ is a telescoper for $f$ (because $\partial_t$ and $D_x$ commute). Such a telescoper if it exists is of minimal order. This approach is the method of reduction-based telescoping and was developed for various classes of functions~\cite{bostan10b,bostan13a,chen15a,chen16a, chen17a, bostan18a, vanderHoeven21}. If $\partial_t=D_t$ is the differentiation with respect to $t$, then telescopers always exist~\cite{Zeilberger1990}. We implemented our algorithm in Maple. More examples and our code are available in~\cite{website}. Similar to the Fuchsian case~\cite[Lemma 24]{chen17a}, for any $i\in\set N$, the derivative $D_t^i \cdot f$ has an additive decomposition~\eqref{EQ:add2} of the form
\begin{equation}
	D_t^i\cdot f = g_i'+h_i\quad\text{with}\quad h_i = \frac{1}{d}R_iW + \frac{1}{x^\lambda e}Q_iV
\end{equation}
where $g_i\in A$, $d\in K[x]$, $R_i \in K[x]^n$, $Q_i \in K[x, x^{-1}]^n$,  with $\deg_x(R_i) < \deg_x(d)$ and $Q_i\in N_V$. Then by~\eqref{EQ: bound_Nv} we obtain an upper bound for the order of telescopers, which is a generalization of~\cite[Corollary 25]{chen17a}.


\begin{corollary}
	Every $f\in A$ has a telescoper of order at most $n\deg_x(d) + \dim_x(N_V)$, which is bounded by \[n(\deg_x(d) + \max\{\tau,0\}+\max\{\lambda+\deg_x(e),\deg_x(B)\} ).\]
\end{corollary}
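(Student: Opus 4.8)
The plan is to confine all the remainders into a single fixed finite-dimensional $K$-vector space and then extract a telescoper by a linear-dependence argument. First I would invoke the additive decompositions
\[
D_t^i\cdot f = g_i' + h_i,\qquad h_i=\frac{1}{d}R_iW+\frac{1}{x^\lambda e}Q_iV,
\]
established in the paragraph preceding the statement (the analogue of~\cite[Lemma 24]{chen17a}), in which the squarefree polynomial $d$, the bases $W,V$ and the standard complement $N_V$ depend only on $L$ and not on~$i$, while $\deg_x(R_i)<\deg_x(d)$ and $Q_i\in N_V$ for every $i\in\set N$. The crucial consequence is that each $h_i$ lies in the single $K$-vector space
\[
\mathcal H := \Bigl\{\tfrac{1}{d}RW \,\Big|\, R\in K[x]^n,\ \deg_x(R)<\deg_x(d)\Bigr\} + N_V,
\]
which is independent of $i$.

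Next I would bound $\dim_K(\mathcal H)$. The first summand is the image of the $K$-linear map $R\mapsto\frac{1}{d}RW$, which is injective because $W$ is a $K(x)$-vector space basis of $A$; its source has dimension $n\deg_x(d)$. The second summand is $N_V$, of dimension $\dim_x(N_V)$. Since the dimension of a sum of two subspaces never exceeds the sum of their dimensions, I obtain $\dim_K(\mathcal H)\le n\deg_x(d)+\dim_x(N_V)=:D$. (In fact the sum is direct, since $\gcd(d,e)=1$ and $d$ is squarefree force the two summands to have poles at disjoint finite places, but only the inequality is needed here.)

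Then comes the pigeonhole step. The $D+1$ elements $h_0,h_1,\dots,h_D$ all lie in $\mathcal H$, whose $K$-dimension is at most $D$, so they are $K$-linearly dependent: there are $c_0,\dots,c_D\in K$, not all zero, with $\sum_{i=0}^D c_ih_i=0$. Putting $T=\sum_{i=0}^D c_iD_t^i$ and using that each $c_i$ is free of $x$ and that $D_t$ commutes with $D_x$, I get
\[
T\cdot f=\sum_{i=0}^D c_i\bigl(g_i'+h_i\bigr)=\Bigl(\sum_{i=0}^D c_ig_i\Bigr)',
\]
so $T\cdot f$ is integrable in $A$ and $T$ is a telescoper for $f$ of order at most $D=n\deg_x(d)+\dim_x(N_V)$, the first claimed bound. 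For the second bound I would estimate $\dim_x(N_V)$ by the dimension of the ambient space $C[x]_{\mu,\delta}^n$ that contains $N_V$, namely $n(\delta-\mu+1)$, and then rewrite this quantity via~\eqref{EQ: bound_Nv} as $n\bigl(\max\{\tau,0\}+\max\{\lambda+\deg_x(e),\deg_x(B)\}\bigr)$, which yields the stated estimate.

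The only delicate point---and thus the main obstacle---is the uniformity of the data $(d,W,V,N_V)$ across all powers $D_t^i$; once this is granted (as it is in the cited preceding paragraph), everything reduces to the elementary fact that $D+1$ vectors in a space of dimension at most $D$ are linearly dependent. I would emphasise that this uniformity is precisely what can fail when $\partial_t$ is a shift rather than $D_t$, which is why the construction here both produces a telescoper and bounds its order in the differential setting.
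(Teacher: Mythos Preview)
Your argument is correct and is exactly the approach the paper intends: the paragraph preceding the corollary already records that all remainders $h_i$ lie in the fixed space $\{\frac{1}{d}RW:\deg_x R<\deg_x d\}+N_V$, and then invokes~\eqref{EQ: bound_Nv} for the explicit bound; your write-up simply spells out the pigeonhole step and the dimension count that the paper leaves implicit. Your remark about the uniformity of $(d,W,V,N_V)$ being the only non-trivial ingredient is well taken and matches the paper's reliance on the analogue of~\cite[Lemma~24]{chen17a}.
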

\begin{example}
	Let $H =\sqrt{t-2x}\exp(t^2x)$ be a hyperexponential function. This function is annihilated by
	\[L =D_x - \frac{2 t^2 x - t^3 +1}{2x-t} \quad \text{and}\quad  D_t - \frac{8 t x^2 - 4 t^2 x - 1}{2(2x-t)}.\]
	An integral basis of $A=K(x)[D_x]/\<L>$ with $K=\set C(t)$ is $\omega = 1$ and a local integral basis at infinity is $v =x^{-1}\omega$. 
	As the integrand $H$ corresponds to $1\in A$, its representation in the bases is $f = \omega = xv$. The additive decomposition of $f$ is
	\[f = \left(\left(\frac{x}{t^2}-\frac{1}{2t^2}\right)v\right)' -\frac{(t^3+1)x - t}{2t^4x(2x-t)}v.\]
	Next we consider the derivative $D_t \cdot f$ which has an additive decomposition
	\[D_t\cdot f = \left(\left(\frac{2x^2}{t} - \frac{3x}{t^3} - \frac{3t^3 - 6}{4t^5}\right)v\right)' - \frac{3(t^3 - 2)((t^3+1)x - t )}{4t^5x(2x-t)}v.\]
	Now we see the reminders of $f$ and $D_t \cdot f$ are linearly dependent over $\set C(t)$, which gives rise to a telescoper $2tD_t-3 (t^3-2)$. This telescoper was obtained in~\cite[Example 21]{bostan13a} with a different reduction approach.
\end{example}
\begin{example}
	Let $F_n(x)=x^nJ_n(x)$ where $J_n$ denotes the Bessel function of the first kind. Then $F_n(x)$ is annihilated by
	\[L =D_x^2 + (1 - 2n)D_x + x\quad\text{and}\quad P =  S_n + xD_x - 2n,
	\]
	where $S_n$ is the shift operator with respect to $n$. An integral basis of $A=K(x)[D_x]/\<L>$ with $K= \set C(n)$ is $W =(\omega_1, \omega_2) = (1, D_x)$ and a local integral basis at infinity is $V =(v_1, v_2) =  (\omega_1, x^{-1}\omega_2)$. As before, $F_n(x)$ is represented by $f= 1 \in A$. The additive decompositions of $f$ and $S_n\cdot f = -xD_x + 2n \in A$ are as follows:
	\[f =(v_2)' + \frac{(2n - 1)x-1}{x} v_2,\]
	\[S_n\cdot f = (-xv_1 -(2n +1)v_2)' +\frac{(2n + 1)((2n -1) x - 1)} {x} v_2.\]
	Now we can find a telescoper $S_n - 2n - 1$. This was obtained by the algorithm in~\cite{bostan18a}.
\end{example}
\noindent\textbf{Acknowledgement.}
We thank Shayea Aldossari for sharing his Maple package {\em integral\_bases}.

\bibliographystyle{plain}
\bibliography{integral}

\end{document}